\documentclass[review,onefignum,onetabnum]{siamart171218}

\usepackage{amsmath}
\usepackage{amsbsy}
\usepackage{amstext}
\usepackage{amssymb}
\usepackage{babel}
\usepackage{xcolor}
\usepackage{siunitx}
\usepackage{float}
\usepackage{graphicx}
\usepackage{esint}
\usepackage{enumitem}

\newcommand{\chap}{Chapman and Wilmott \cite{chapman2021}~}



\usepackage{lipsum}
\usepackage{amsfonts}
\usepackage{graphicx}
\usepackage{epstopdf}
\usepackage{algorithmic}
\ifpdf
  \DeclareGraphicsExtensions{.eps,.pdf,.png,.jpg}
\else
  \DeclareGraphicsExtensions{.eps}
\fi


\newsiamremark{remark}{Remark}
\newsiamremark{hypothesis}{Hypothesis}
\crefname{hypothesis}{Hypothesis}{Hypotheses}
\newsiamthm{claim}{Claim}

\headers{Homogenisation of nonlinear blood flow in periodic networks}{Y. Ben-Ami, B. D. Wood, J. M. Pitt-Francis, P. K. Maini, and H. M. Byrne}

\title{Homogenisation of nonlinear blood flow in periodic networks: the limit of small haematocrit heterogeneity \thanks{Submitted to the editors \today.
\funding{This work was supported by the Engineering and Physical Sciences Research Council [grant number EP/X023869/1].}}}

\author{
Yaron Ben-Ami\thanks{Wolfson Centre for Mathematical Biology, Mathematical Institute, University of Oxford, Oxford, OX2 6GG, United Kingdom
  (\email{yaron.ben-ami@maths.ox.ac.uk}, \email{philip.maini@maths.ox.ac.uk}, \email{helen.byrne@maths.ox.ac.uk}).}
\and Brian D. Wood\thanks{School of Chemical, Biological, and Environmental Engineering, Oregon State University, Corvallis, OR 97331, United States
  (\email{brian.wood@oregonstate.edu}).}
\and Joe M. Pitt-Francis \thanks{Department of Computer Science, University of Oxford, Oxford, OX1 3QD, United Kingdom 
  (\email{Joe.Pitt-Francis@cs.ox.ac.uk}).}
\and Philip K. Maini\footnotemark[2]
\and Helen M. Byrne\footnotemark[2] \thanks{Ludwig Institute for Cancer Research, University of Oxford, Oxford, OX3 7DQ, United Kingdom}
}

\usepackage{amsopn}

\makeatletter
\newcommand*{\addFileDependency}[1]{
  \typeout{(#1)}
  \@addtofilelist{#1}
  \IfFileExists{#1}{}{\typeout{No file #1.}}
}
\makeatother

\newcommand*{\myexternaldocument}[1]{%
    \externaldocument{#1}%
    \addFileDependency{#1.tex}%
    \addFileDependency{#1.aux}%
}

\ifpdf
\hypersetup{
  pdftitle={Homogenisation of nonlinear blood flow in periodic networks: the limit of small haematocrit heterogeneity},
  pdfauthor={Yaron Ben-Ami, Brian D. Wood, Joe M. Pitt-Francis, Philip K. Maini, and Helen M. Byrne}
}
\fi


\myexternaldocument{ex_supplement}


\begin{document}

\nolinenumbers

\maketitle

\begin{abstract}
In this work we develop a homogenisation methodology to upscale mathematical descriptions of microcirculatory blood flow from the microscale (where individual vessels are resolved) to the macroscopic (or tissue) scale. 
Due to the assumed two-phase nature of blood and specific features of red blood cells (RBCs), mathematical models for blood flow in the microcirculation are highly nonlinear, coupling the flow and RBC concentrations (haematocrit). In contrast to previous works which accomplished blood-flow homogenisation by assuming that the haematocrit level remains constant, here we allow for spatial heterogeneity in the haematocrit concentration and thus begin with a nonlinear microscale model. We simplify the analysis by considering the limit of small haematocrit heterogeneity which prevails when variations in haematocrit concentration between neighbouring vessels are small.
Homogenisation results in a system of coupled, nonlinear partial differential equations describing the flow and haematocrit transport at the macroscale, in which a nonlinear Darcy-type model relates the flow and pressure gradient via a haematocrit-dependent permeability tensor.  During the analysis we obtain further that haematocrit transport at the macroscale is governed by a purely advective equation. 
Applying the theory to particular examples of two- and three-dimensional geometries of periodic networks, we calculate the effective permeability tensor associated with blood flow in these vascular networks. We demonstrate how the statistical \linebreak
distribution of vessel lengths and diameters, together with the average haematocrit level, affect the statistical properties of the macroscopic permeability tensor. These data can be used to simulate blood flow and haematocrit transport at the macroscale.
\end{abstract}

\begin{keywords}
  homogenisation, multiple scales, blood flow, haematocrit
\end{keywords}

\begin{AMS}
  76Z05, 35B27, 76M50, 34B45, 92-08
\end{AMS}

\section{Introduction}
The microcirculation is a complex network of vessels whose
primary function is to distribute oxygen and nutrients to the tissues and remove waste products. Due to the large number of vessels and their characteristically small diameter, the microcirculation also plays a major role in blood flow resistance. In diseased states, increases in blood-flow resistance can lead to increased blood pressure and risk of cardiovascular events such as stroke \cite{letcher1981direct,lowe1997blood}; additionally, the abnormal structure of the microcirculatory network in tumours, characterised by many immature leaky vessels and small inter-branching distances, leads to inefficient and inhomogeneous
distribution of oxygen, which can manifest as tumour hypoxia \cite{Jain05,nagy2009tumour}, and have poor consequences in terms of tumour prognosis.
Therefore, it is of evident interest to gain a better understanding of how the structure of the microcirculatory network affects blood flow. 

State-of-the-art imaging modalities (e.g., \cite{brown2022quantification,d2018computational}) can now generate detailed \linebreak
information about microcirculatory networks with high fidelity and spatial resolution. However numerical simulations of blood flow and the associated transport of oxygen at the scale of entire tissues is still too computationally intensive to conduct, even for relatively small volumes of tissue. This is especially true when the intrinsic nonlinearities of blood flow (see below) are taken into consideration and when the flow is unsteady.

One solution to the problem of resolution over large volumes of tissue is to \emph{replace} the microscale mass and momentum balances with an upscaled representation. This is particularly effective when microscale information is not critical for the modeling goals. One such methodology that has been applied successfully in a variety of fields across science and engineering is the method of multiple-scale homogenisation.  A useful feature of this approach is that the microstructural properties of the vascular network are related to the macroscale transport properties in a coherent and \linebreak predictable way. Thus, homogenisation of blood flow in vascular networks can leverage the (nowadays) available microscale data of network structures (e.g., \cite{brown2022quantification,d2018computational}) to generate predictions regarding measurable distributions of flow and oxygen transport at the macroscale (see for example, \cite{gonccalves2015decomposition,matsumoto2010imaging}).

Within the homogenisation framework, a representative volume is considered such that a separation of length scales is achieved.  In other words, the characteristic length scale of the representative volume is sufficiently smaller than the macrosopic length scale (defining, for example, the tissue comprising an organ).  While vascular networks do not always allow definition of a \emph{representative} volume, at the scale of the network of the smallest microcapillaries, this approximation is reasonable due to the homogeneous, space-filling nature of the capillary bed \cite{lorthois2010fractal}. 
The equations governing the microscale problem in the representative control volume are asymptotically \linebreak expanded in powers of the ratio of the micro-to-macro length scales. This procedure results in solvability conditions in terms of governing equations for the propagation of the leading-order solution at the macroscale.
In the specific context of low-Reynolds-number, fixed-haematocrit flow  in networks, the homogenisation process results in a Darcy-type equation relating the flow and pressure gradient through an effective permeability tensor which depends on the geometry of the representative microscale network \cite{chapman2021}. 

Blood flow in the microcirculation (specifically in the small-arteriolar and \linebreak capillary regions) is slow ($U \sim 1\, \unit{mm/s}$), the characteristic diameter is small ($D \sim 10\, \unit{\mu m}$), and a typical value of the kinematic viscosity of blood is $\nu \sim 1\,\unit{mm^2/s}$. Consequently, the Reynolds number ($\mathrm{Re}=UD/\nu$), describing the ratio of viscous and inertial terms in the momentum balance, is very small ($\mathrm{Re} \sim 0.01 \ll 1$). For simplicity, blood is often viewed as a Newtonian fluid which, together with the low-Reynolds dynamics, justifies the use of a linear, Stokes-flow model. However, this approximation ignores some of the complexities arising from the particulate nature of blood; red blood cells (RBC) are deformable particles that occupy a significant proportion of the blood volume (typically $\sim 45 \%$ in large vessels, but can be somewhat lower in the microcirculation \cite{boyle1988microcirculatory}). The high elasticity of the RBCs, coupled with the shear forces induced by the plasma flow, affect RBC migration in several ways, the most important of which is their tendency to migrate transversely, away from the vessel wall, leaving a cell-free layer in the vicinity of the wall. This transverse migration alters the resistance to blood flow as the haematocrit volume fraction and vessel diameter change, a phenomenon known as the F\aa{}hr\ae{}us-Lindqvist effect \cite{FL31}. To account for this effect in continuum models for blood flow, Pries et al. \cite{pries1994resistance} derived an empirical model for the apparent viscosity of blood as a function of the discharge haematocrit (ratio of RBC flux to total flux) and vessel diameter.
The existence of an RBC-depleted layer in the vicinity of the vessel wall introduces an additional complexity of blood flow -- a nonlinear biasing of the haematocrit flow at a vessel bifurcation towards the daughter branch with the higher flow, a phenomenon known as ``plasma skimming'', which leads to an inhomogeneous distribution of haematocrit in the vascular network \cite{pries1989red}.
Together, the effects of the haematocrit-dependent viscosity and biased haematocrit splitting at vessel bifurcations result in a highly nonlinear system of equations coupling haematocrit concentrations and flow rates in different vessels of a network. 

The aforementioned nonlinearities have been shown to drive the emergence of multiple equilibria and oscillatory dynamics in irregularly-structured vessel networks \cite{ben2022structural, karst2023modeling,KS15}. Many previous papers on the topic of homogenisation of blood-flow (e.g., \cite{chapman2008multiscale,el2015multi,peyrounette2018multiscale,shipley2010multiscale,shipley2020hybrid,shipley2020four}), assumed a constant haematocrit so that the nonlinearities associated with spatially varying haematocrit were neglected. In this work we maintain the spatial variation of haematocrit, and account for the resulting nonlinear properties of blood in our homogenisation methodology.  This approach  requires the momentum and haematocrit balances to be coupled. We found that if vessel diameters are sufficiently large ($\gg 1\,\unit{\mu m}$), the splitting phenomenon at vessel bifurcations is subdominant such that, at leading order, the average-macroscopic haematocrit is merely advected by the average flow. In turn, the uniformity of haematocrit at the \emph{microscale} (it changes only on the \emph{macroscopic} length scale) results in a nonlinear Darcy relation between the macroscale flow and pressure gradient, where the permeability tensor depends on the macroscopic haematocrit.

The remainder of the manuscript is structured as follows. In \cref{sec:blood_flow_model} we describe the model for blood flow in vascular networks. We present the homogenisation methodology in two parts: in \cref{sec:homog_flow} we extend the homogenisation procedure \linebreak developed by \chap for homogenisation of flow in networks to account for vessel conductances that are haematocrit-dependent. This yields a Darcy-type equation in which the permeability tensor is a function of the haematocrit. Then, in \cref{sec:homog_haematocrit} we close the system of equations by introducing our approach for homogenisation of haematocrit propagation which is based on an assumption of small haematocrit heterogeneity (small deviations from uniform haematocrit).
The resulting macroscale homogenised equations include the permeability tensor, which contains details about the structure of the microscale representative network. In \cref{sec:permeability} we present results for the permeability tensor as a function of the haematocrit level and the geometrical and topological features of the elementary network (i.e., the representative volume). The paper concludes in \cref{sec:conclusion} where we summarise our findings and outline possible directions for future research.

\section{Model for blood flow in vascular networks\label{sec:blood_flow_model}}

\subsection{Model for blood flow in a single vessel\label{subsec:single_vessel}}
In this section we explain how we relate the blood flow through a single cylindrical vessel to the pressure difference between its ends (nodes). We denote by $p^*_i$ the pressure at node $i$ and by $q^*_{ij}$ and $h_{ij}$ the flow rate and haematocrit (the ratio of RBC flow to the total flow), respectively, in the vessel connecting nodes $i$ and $j$.
Here, and throughout the manuscript, asterisks denote dimensional quantities.

We begin by describing the flow in a single vessel segment.  Thus, for each vessel segment, $ij$, conservation of fluid mass yields the relation
\begin{equation}
\frac{\partial q_{ij}^*}{\partial x^*}=0.
\label{eq:incom}
\end{equation}
Here, $x^*$ is the coordinate coincident with  the vessel axis. Consequently, we have that $q_{ij}^*$ is uniform along each vessel segment (but may vary between segments). Our approach requires that we know both the distribution of pressures among segment nodes, and the haematocrit in each segment.  As standard for haematocrit, we assume that it is transported hyperbolically, which is equivalent to assuming that the Peclet number is sufficiently large.  Thus, assuming a steady-state flow, conservation of haematocrit is given by
\begin{equation}
\frac{\partial }{\partial x^*}(q_{ij}^*h_{ij})=0,
\label{eq:h_conserv}
\end{equation}
where $q_{ij}^*h_{ij}$ represents the RBC flux (and where $0 \le h_{ij} \le 1$ represents the RBC concentration). By combining \cref{eq:incom} and \cref{eq:h_conserv} it is immediately apparent that $h_{ij}$ does not vary with the position in a given vessel.

Since we focus on blood flow in the microcirculation, we assume low-Reynolds-number flow and that the pulsatile variations in flow are sufficiently damped that steady-state flow conditions prevail.  Each segment is assumed to be cylindrical, with a uniform diameter, $d_{ij}^{*}$, whose length, $l_{ij}^{*}$, is much larger than its diameter. Under these assumptions, the flow satisfies Poiseuille's law
\begin{equation}
q_{ij}^{*}=-\frac{\pi}{128}\frac{d_{ij}^{*4}}{l_{ij}^{*}\mu_{0}^{*}\mu(h_{ij},\widetilde{d}_{ij})}(p^*_{j}-p^*_{i}).
\label{Poiseuille}
\end{equation}
Here, the apparent viscosity of blood is given
by $\mu_{0}^{*}\mu$, where $\mu_{0}^{*}$ represents the plasma
viscosity and $\mu$ is a nondimensional function which represents the effect that the haematocrit fraction, $h_{ij}$, and vessel diameter ($\widetilde{d}_{ij}=d^*_{ij}/d_{\mu}^{*}$ denotes the dimensionless 
vessel diameter, with $d_{\mu}^{*}\equiv 1~ \unit{\mu m}$) have on the effective blood viscosity. The vessel diameter is scaled by  $d_{\mu}^{*}$ in the viscosity function for consistency with the functional form proposed by Pries et al. \cite{pries1994resistance} to model the empirical blood viscosity. The explicit expression for the blood viscosity used in this work is detailed in \cref{app:blood_viscosity} and is based on the model of \cite{pries1994resistance}.

Upon translating the balances to dimensionless form, we scale diameters of vessels by $d_{0}^{*}$, a representative vessel diameter in the network, which is typically much larger than $1~ \unit{\mu m}$ ($d_{0}^{*} \gg d_{\mu}^{*}$); and vessel lengths by a reference length, $l_0^*$. Additionally, we introduce
the macroscopic length scale, $\Lambda_{0}^{*}$, representing the characteristic size of the full vascular network (i.e., a network containing a large number of vessels), such that we can define
a small parameter $\epsilon$ as follows
\begin{equation}
\epsilon=\frac{l_{0}^{*}}{\Lambda_{0}^{*}}\ll1.
\label{eq:epsilon}
\end{equation}
The pressure is normalised by the macroscopic pressure difference,
$\triangle p_{0}^{*}$, representing the characteristic magnitude of the pressure difference between the macroscopic \linebreak boundaries. Accordingly, we scale the flow by $\Gamma_0 \equiv\pi d_{0}^{4*}\triangle p_{0}^{*}/(128\Lambda_{0}^{*}\mu_{0}^{*})$
to obtain the nondimensional equation

\begin{equation}
q_{ij}=-\frac{g_{ij}}{\epsilon}(p_{j}-p_{i}),\label{eq:q_p}
\end{equation}
where the nondimensional conductance, $g_{ij}$, is given by
\begin{equation}
g_{ij}=\frac{d_{ij}^{4}}{l_{ij}\mu(h_{ij},\widetilde{d}_{ij})}.\label{eq:g_ij_nondimensional}
\end{equation}

\subsection{Model for blood flow in a periodic network\label{subsec:network_model}}
In this work, blood flow is represented as occurring in a  network of vessel segments (edges) that are connected at bifurcations (nodes). For closure, we assume periodicity of the network such that it forms a tessellation of periodic unit cells filling the macroscale domain. Each unit cell is composed of $N$ nodes, connected \emph{internally} by edges, and is \emph{inter-connected} to a number of identical neighbouring unit cells. The (spatially uniform) set of relative displacements between identical nodes in neighbouring unit cells defines the periodic pattern of the network. Following \cite{chapman2021}, we denote the relative displacement vector by $\mathbf{r} \circ \mathbf{L}$, where $\mathbf{L}$ is the vector of unit-cell dimensions and $\circ$ marks the Hadamard product; $\bf r$ indicates the direction of displacement in cell-length units.  

In periodic networks, two nodes in neighbouring unit cells may be connected via either ``inlet'' or ``outlet'' vessels.  Within each unit cell, there are exactly $N$ nodes.  Any such node may be connected to any of the other $N-1$ nodes within the unit cell, or to nodes that belong to adjacent unit cells.  Thus, in order to fully describe the periodic unit cell, the vessels spanning more than one unit cell must be appropriately identified. For that purpose we follow \cite{chapman2021} and denote each vessel by a double-small-letter-subscript, $ij$, marking its end nodes, together with a superscript $\mathbf{r}$, where $\mathbf{r}$ represents a displacement vector specifying an adjacent unit cell. Accordingly, the set $\mathcal{N} = \{\boldsymbol{0}, \mathbf{r}_1,...,\mathbf{r}_{N_\mathrm{adj}}\}$ defines the set of edge displacements, where $N_\mathrm{adj}$ is the number of neighbouring unit cells. Here, $\mathbf{r}=\boldsymbol{0}$ will mark a vessel whose coordinates $i$ and $j$ are both within a single unit cell; the remaining $N_\mathrm{adj}$ elements of $\mathcal{N}$ specify the displacements for the remaining edges that begin with one of the $N$ nodes in the unit cell but terminate in an adjacent unit cell. \Cref{fig:periodic_network_illustration} shows a schematic of a periodic network, where some examples for the convention used to define the different edges are shown.

\begin{figure}[H]
\includegraphics[scale=0.44]{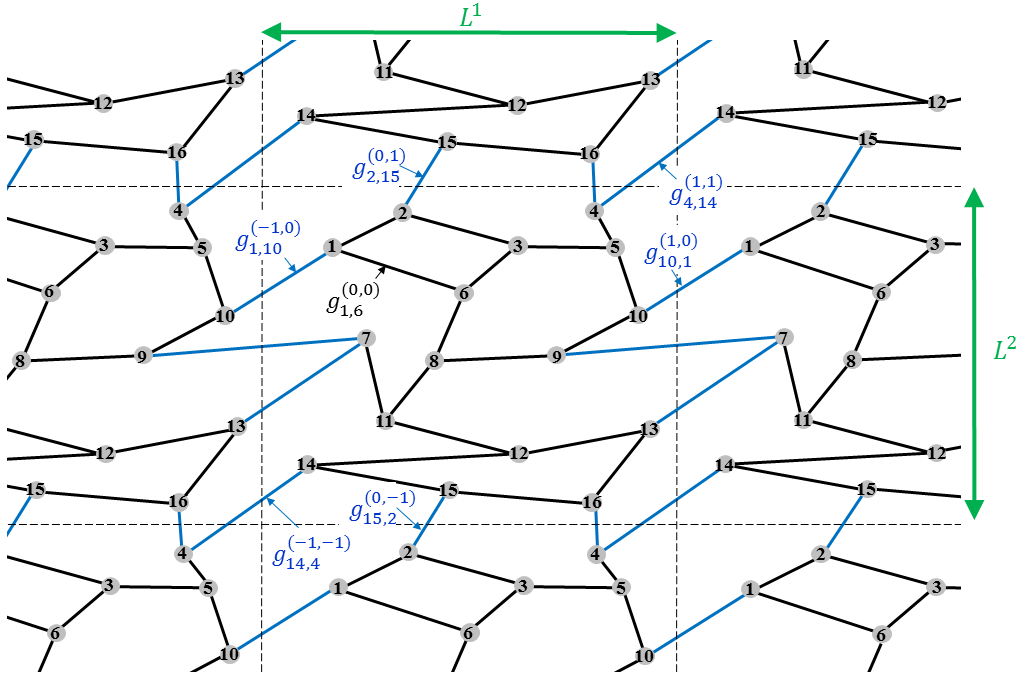}
\centering{}\caption{Schematic of a 2D periodic network. Each unit cell in this example network is composed of 16 nodes, 18 cell-internal edges (black), and 12 edges spanning more than one unit cell (blue). In this specific example, $\mathbf{L}=(L^1,L^2)$ and $\mathcal{N} = \{(0,0),(1,0),(-1,0),(0,1),(0,-1),(1,1),(-1,-1)\}$. Some examples for the convention used to describe the edge conductances, $g_{ij}^{\bf r}$, are shown in the figure.
\label{fig:periodic_network_illustration}}
\end{figure}

For simplicity of notation, we will sometimes use a single capital-letter-subscript, $J$, to denote a vessel so that the transformation between vessel index and its \linebreak
corresponding double-node indices is unique. For example, the haematocrit
in vessel $J$ can also be represented as the haematocrit in vessel
$j_{0}j_{1}$ where $j_{0}$ and $j_{1}$ are the end nodes of vessel
$J$, $h_{J} \equiv h_{j_{0}j_{1}}^{\mathbf{r}}$. Its periodic counterpart will be denoted by $h_{\hat{J}} \equiv h_{j_{1}j_{0}}^{-\mathbf{r}}$.

At this juncture, we specify the system of equations governing the flow in the unit cell. By applying conservation of flow at each node, $i$, we write
\begin{equation}
\sum_{\mathbf{r} \in \mathcal{N}}\sum_{j=1}^{N}q_{ij}^{\mathbf{r}}=\sum_{\mathbf{r}}\sum_{j=1}^{N}g_{ij}^{\mathbf{r}}(p_{j}-p_{i})=0,\;\;\;\text{for}\;i=1,..,N,
\label{eq:sum_q}
\end{equation}
where the summation over all $\mathbf{r} \in \mathcal{N}$ covering both cell-internal vessels and vessels connected to each of the neighbouring unit cells. Here, $g_{ij}^{\mathbf{r}} \equiv 0$ if nodes $i$ and $j$ are not connected under the orientation defined by~$\mathbf{r}$.

In order to close equations (\ref{eq:sum_q}), the conductance, $g_{ij}^{\mathbf{r}}=g_{ij}^{\mathbf{r}}(h_{ij}^{\mathbf{r}})$, must be \linebreak evaluated. Accordingly, the distribution of haematocrit in the vessel network must be determined.
In general, the haematocrit in a vessel depends on the haematocrit and flow in its parent vessel(s) \cite{pries1989red}.
Therefore, three different types of vessel exist in each unit cell, as illustrated in \cref{fig:vessel_types}: (i) vessels emerging from bifurcating flow junctions;
(ii) vessels originating from converging flow junctions; (iii) vessels originating from nodes in neighbouring unit cells (``inlet vessels''). Here, we define the haematocrit in each vessel as
\begin{equation}
h_{J}=F_{J}\left( h_{J_p}, q_{J_p} \right),\;\;\;\text{for}\;J=1,..,M
\label{eq:h_general}
\end{equation}
where $J_p$ is (are) the parent vessel(s) of vessel $J$, and $F_{J}$ depends on the type of the vessel; we postpone its definition to \cref{sec:homog_haematocrit}. The number of vessels in the unit cell, $M$, refers to all the vessels that are connected to at least a single node in the unit cell.

\begin{figure}[H]
\includegraphics[scale=0.45]{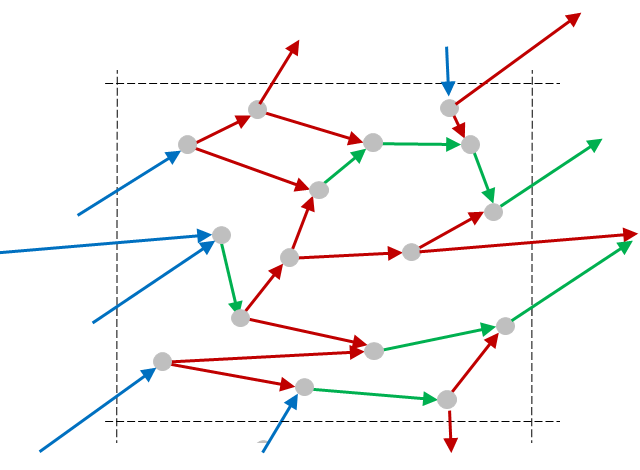}
\centering{}\caption{Examples of different vessel types in a periodic unit cell of vascular networks. In this example, the direction of flow at each edge of the unit cell from \cref{fig:periodic_network_illustration} was chosen arbitrarily and was denoted by the direction of the arrow. The different vessel types were then marked accordingly: (i) vessels emerging from bifurcating-flow junctions (red); (ii) vessels originating from converging-flow junctions (green); (iii) vessels originating from nodes in neighbouring unit cells (blue).
\label{fig:vessel_types}}
\end{figure}

\Cref{eq:sum_q,eq:h_general}, together with \cref{eq:q_p,eq:g_ij_nondimensional}, constitute a coupled system of nonlinear equations for $p_i$ ($i=1,...,N$), $q_J$, and $h_J$ ($J=1,...,M$).
In this paper we seek to upscale the microscopic unit-cell-level description given by \cref{eq:q_p,eq:g_ij_nondimensional,eq:sum_q,eq:h_general} to the macroscopic level using the method of multiscale homogenisation. The methodology of homogenisation of flow in periodic networks was recently formulated by \chap for the linear case where the vessel conductances are independent of the flow. 
Here, we generalise the methodology of \chap to a case where the conductances are determined by a viscosity that is \emph{nonlinearly} coupled to the vessel haematocrit; the haematocrit, in turn, depends on the flow through interactions at vessel bifurcations.  In \cref{sec:homog_flow} we will follow the scheme of \chap to derive macroscopic equations for the flow in terms of the 
conductances, $g_{ij}^{\mathbf{r}}$. Then, in \cref{sec:homog_haematocrit} we will
derive homogenized equations governing haematocrit propagation through the network so that the conductance can be coupled to the haematocrit through a nonlinear apparent viscosity function.

\section{Homogenisation of the flow} \label{sec:homog_flow}

\subsection{Asymptotic solution for the pressure}

One critical difference in this problem from the work conducted by \chap is that the \linebreak conductances, $g^{\bf r}_{ij}$, depend upon spatial location through the coupling with the \linebreak haematocrit (and, hence, ultimately the pressure).  While the conductance in each vessel \emph{segment} is constant, the conductance may be different for each such segment.  To spatially identify vessel segments, each vessel segment is tagged with the coordinate representing its midpoint.  Thus, we have

\begin{equation}
    g^{\bf r}_{ij} = g^{\bf r}_{ij}\hspace{-1mm}\left(\mathbf{x}+\tfrac{1}{2}\epsilon(\mathbf{x}_{i}+\mathbf{x}_{j}+\mathbf{r}\circ\mathbf{L}) \right).
\end{equation}
Here, $\mathbf{x}$ is the origin of the unit cell normalized by the macroscopic length scale, $\Lambda_0$. The dimensional position of a node within the unit cell, as well as the length of the unit-cell, both scale with the characteristic vessel length, such that
\[
\mathbf{x}_{i}^* =  l_0 \mathbf{x}_{i}\ \ \ \text{and}\ \ \ \mathbf{L}^* =  l_0 \mathbf{L}.
\]
Therefore, the normalization by $\Lambda_0$ yields the $O(\epsilon)$ scaling.
For notational \linebreak convenience, we set
\begin{equation}
    \overline{\mathbf{x}}^{\bf r}_{ij} = \mathbf{x}+\tfrac{1}{2}\epsilon(\mathbf{x}_{i}+\mathbf{x}_{j}+\mathbf{r}\circ\mathbf{L}).
\label{eq:x_ij}
\end{equation}

We are now ready to write the nodal pressures
in \cref{eq:sum_q} as a function of their spatial locations as follows
\begin{equation}
\sum_{\mathbf{r} \in \mathcal{N}}\sum_{j=1}^{N}g_{ij}^{\mathbf{r}}(\overline{\mathbf{x}}^{\bf r}_{ij})\Big[p(\mathbf{x}+\epsilon\mathbf{x}_{j}+\epsilon\mathbf{r}\circ\mathbf{L})-p(\mathbf{x}+\epsilon\mathbf{x}_{i})\Big]=0,\;\;\;\text{for}\;i=1,..,N.\label{eq:flow_balance}
\end{equation}
\rule[0.5ex]{1\columnwidth}{0.5pt}

\subsubsection*{Sidebar: Expansion of the flow fields and network periodicity\label{sec:sidebar}}

At this juncture, we would like to 
perform Taylor series expansions for
the pressure and conductance 
in a neighbourhood of
the cell origin, $\bf x$. 
First, it is important to explain and justify the use of a Taylor series expansion for the conductance which is generally not continuous at vessel bifurcations (also, while the pressure is continuous, its gradient is not). Here, we make the case for the conductance, but the same arguments apply for the pressure.

Consider $g_{ij}^{\mathbf{r}} \equiv \bf G$, a vector of conductances of all vessels in a single unit cell. Then, expansion of $\bf G$ around the cell origin means that the change of a specific vessel conductance (a single component in $\bf G$) between unit cells is continuous. As such, the discontinuity of the conductance at vessel bifurcations (which manifests by $G_I \neq G_J$, where $I$ and $J$ are two vessels connected at a bifurcation) is not important for the expansion, because we expand each component separately. 

The above discussion leads to the explanation of how we impose network \linebreak \emph{periodicity}: without loss of generality, we can consider two neighbouring unit cells, which are concatenated in the $\bf x^1$ direction and numbered $n$ and $n+1$; the unit cells have a cell-length, $\epsilon L^1$ in the $\bf x^1$ direction. We accordingly define the $x^1$-component of the origin of the $n$th cell as $(n-1)\epsilon L$. Then, the conductance of a vessel originating in cell $n$ and terminating in cell $n+1$ (having an indicator $\mathbf{r^+}\equiv(1,0,0)$) is
\begin{equation}
\begin{aligned}
g_{ij}^{\mathbf{r}^+}\left(n,\overline{\mathbf{x}}^{\mathbf{r^+}}_{ij}\right)= & g\left((n-1)\epsilon L \mathbf{x^1} + \tfrac{1}{2}\epsilon(\mathbf{x}_{i}+\mathbf{x}_{j}+L\mathbf{x^1})\right) \\
= & g\left(\epsilon\left[ (n-\tfrac{1}{2})L\mathbf{x^1} + \tfrac{1}{2}(\mathbf{x}_{i}+\mathbf{x}_{j})\right] \right),
\end{aligned}
\label{eq:g_r_1}
\end{equation}
where $g(\mathbf{x})$ is a continuous and once-differentiable function of the spatial coordinate, $\mathbf{x}$. Equivalently, the conductance of the \emph{same} vessel can be written as a vessel originating in cell $n+1$ and terminating in cell $n$ (with an indicator $\mathbf{r^-}=-\mathbf{r^+}$):
\begin{equation}
\begin{aligned}
 g_{ji}^{\mathbf{r^-}}\left(n+1,\overline{\mathbf{x}}^{\mathbf{r^-}}_{ji}\right)=& g\left(n\epsilon L\mathbf{x^1} + \tfrac{1}{2}\epsilon(\mathbf{x}_{i}+\mathbf{x}_{j}-L\mathbf{x^1})\right) \\
 = &  g\left(\epsilon\left[ (n-\tfrac{1}{2})L\mathbf{x^1} + \tfrac{1}{2}(\mathbf{x}_{i}+\mathbf{x}_{j})\right] \right),
 \end{aligned}
 \label{eq:g_r_minus_1}
\end{equation}
which is identical to the expression in \cref{eq:g_r_1}. Therefore, the \emph{periodicity} condition is given by
\begin{equation}
g_{ij}^\mathbf{r^+}\left(n,\overline{\mathbf{x}}^{\mathbf{r^+}}_{ij}\right) = g_{ji}^{\mathbf{r^-}}\left(n+1,\overline{\mathbf{x}}^{\mathbf{r^-}}_{ji}\right).
\label{eq:periodicity_n}
\end{equation}
Noting that the number of the unit cell can be simply replaced by the spatial location of its origin, we can write the periodicity condition \cref{eq:periodicity_n} in the form
\begin{equation}
g_{ij}^{\mathbf{r}}(\mathbf{x})=g_{ji}^{-\mathbf{r}}(\mathbf{x}),
\label{eq:g_period}
\end{equation}
such that any variation in the conductance of a specific vessel, $ij$, between different cells is accounted for via 
the dependence on the spatial location, $\bf x$.
\newline
\rule[0.5ex]{1\columnwidth}{0.5pt}

\vspace{3mm}
Returning now to \cref{eq:flow_balance} and expanding it in a Taylor series centred on 
the cell origin, ${\bf x}$, the result is a series in powers of $\epsilon$ 
\begin{equation}
\begin{aligned} 
& \sum_{\mathbf{r} \in \mathcal{N}}\sum_{j=1}^{N}\left[g_{ij}^{\mathbf{r}}+\frac{1}{2}\epsilon\sum_{s=1}^{3}(x_{i}^{s}+x_{j}^{s}+r^{s}L^{s})\frac{\partial g_{ij}^{\mathbf{r}}}{\partial x^{s}}+...\right] \\
& \times\Biggl[\Biggr. p_{j}-p_{i}+\epsilon\sum_{k=1}^{3}(x_{j}^{k}+r^{k}L^{k})\frac{\partial p_{j}}{\partial x^{k}}-\epsilon\sum_{k=1}^{3}x_{i}^{k}\frac{\partial p_{i}}{\partial x^{k}} \\
& +\frac{\epsilon^{2}}{2}\sum_{k=1}^{3}\sum_{l=1}^{3}(x_{j}^{k}+r^{k}L^{k})(x_{j}^{l}+r^{l}L^{l})\frac{\partial^{2}p_{j}}{\partial x^{k}\partial x^{l}}
-\frac{\epsilon^{2}}{2}\sum_{k=1}^{3}\sum_{l=1}^{3}x_{i}^{k}x_{i}^{l}\frac{\partial^{2}p_{i}}{\partial x^{k}\partial x^{l}}+...\Biggl.\Biggr] = 0.
\end{aligned}\label{eq:expansion}
\end{equation}
The expansion of $g_{ij}^{\mathbf{r}}$
reflects the dependence of the conductance on the haematocrit; later, this will be expanded explicitly in terms of the haematocrit. From here forward, unless stated otherwise, functions are evaluated at the cell origin, $\mathbf{x}$.

To clarify the notation,  the nodal pressures are written as a vector, \linebreak
$\mathbf{P}=(p_{1},...,p_{N})^{T}$. Then, \cref{eq:expansion} can be expressed as a system of $N$ equations in the following compact matrix form
\begin{equation}
A_{0}\mathbf{P}+\epsilon\left[\sum_{l=1}^{3}B_{0}^{l}\frac{\partial\mathbf{P}}{\partial x^{l}}+A_{1}\mathbf{P}\right]+\epsilon^{2}\left[\sum_{k=1}^{3}\sum_{l=1}^{3}C_{0}^{kl}\frac{\partial^{2}\mathbf{P}}{\partial x^{k}\partial x^{l}}+\sum_{l=1}^{3}B_{1}^{l}\frac{\partial\mathbf{P}}{\partial x^{l}}\right]=O(\epsilon^{3}),\label{eq:eq_P_vec}
\end{equation}
where
\[
\begin{gathered}
A_{0}=\sum_{\mathbf{r} \in \mathcal{N}}\left[g_{ij}^{\mathbf{r}}-\delta_{ij}\sum_{m=1}^{N}g_{im}^{\mathbf{r}}\right],\\
B_{0}^{l}=\sum_{\mathbf{r} \in \mathcal{N}}\left[(x_{j}^{l}+r^{l}L^{l})g_{ij}^{\mathbf{r}}-\delta_{ij}x_{i}^{l}\sum_{m=1}^{N}g_{im}^{\mathbf{r}}\right],\\
A_{1}=\frac{1}{2}\sum_{\mathbf{r} \in \mathcal{N}}\left[\sum_{s=1}^{3}(x_{i}^{s}+x_{j}^{s}+r^{s}L^{s})\frac{\partial g_{ij}^{\mathbf{r}}}{\partial x^{s}}-\delta_{ij}\sum_{m=1}^{N}\sum_{s=1}^{3}(x_{i}^{s}+x_{j}^{s}+r^{s}L^{s})\frac{\partial g_{im}^{\mathbf{r}}}{\partial x^{s}}\right],\\
C_{0}^{kl}=\frac{1}{2}\sum_{\mathbf{r} \in \mathcal{N}}\left[(x_{j}^{k}+r^{k}L^{k})(x_{j}^{l}+r^{l}L^{l})g_{ij}^{\mathbf{r}}-\delta_{ij}x_{i}^{k}x_{i}^{l}\sum_{m=1}^{N}g_{im}^{\mathbf{r}}\right],
\end{gathered}
\]
and
\[
\begin{aligned}
   B_{1}^{l} = & \sum_{\mathbf{r} \in \mathcal{N}}\Biggl[\Biggr.(x_{j}^{l}+r^{l}L^{l})\sum_{s=1}^{3}(x_{i}^{s}+x_{j}^{s}+r^{s}L^{s})\frac{\partial g_{ij}^{\mathbf{r}}}{\partial x^{s}} \\
  & \qquad \qquad \qquad \qquad \qquad \qquad -\delta_{ij}x_{i}^{l}\sum_{m=1}^{N}\sum_{s=1}^{3}(x_{i}^{s}+x_{m}^{s}+r^{s}L^{s})\frac{\partial g_{im}^{\mathbf{r}}}{\partial x^{s}}\Biggl.\Biggr]. 
\end{aligned}
\]
In these expressions, $\delta_{ij}$ is the conventional Kronecker delta tensor.  The structure of the matricies $A_0$, $A_1$, $B_1^l$ and $C_0^{kl}$ depend upon the particular connectivity of the network. 

We aim to solve \cref{eq:eq_P_vec} in the limit $\epsilon\rightarrow0$
by seeking an asymptotic expansion for $\mathbf{P}$ of the form
\begin{equation}
\mathbf{P}=\mathbf{P}^{(0)}+\epsilon\mathbf{P}^{(1)}+\epsilon^{2}\mathbf{P}^{(2)}+...\;.\label{eq:p_exp}
\end{equation}
At this juncture, the conductances are expanded in the form
\begin{equation}
g_{ij}^{\mathbf{r}}=g_{ij}^{\mathbf{r}(0)}+\epsilon g_{ij}^{\mathbf{r}(1)}+...\,.
\label{eq:g_ij_exp}
\end{equation}
The expansion of the conductances will ultimately correspond to the expansion of the haematocrit; this will be explained in further detail in \cref{sec:homog_flow} [see \cref{eq:expansion_h_q} \emph{et seq.}].

\vspace{2mm}

\paragraph*{Identities}

Defining the $[N \times 1]$ vector $\mathbf{u}=(1,...,1)^{T}$, we have the following identities (these were proved in \cite{chapman2021}; for completeness, we reiterate the proofs in \cref{app:identities_1}):

\begin{subequations}\label{eq:identities}
\begin{equation}
A_{0} = A_{0}^{T}; \label{eq:identities:a}
\end{equation}
\begin{equation}
A_{0}\mathbf{u} = A_{1}\mathbf{u}=\mathbf{0};
\label{eq:identities:b}
\end{equation}
\begin{equation}
\mathbf{u}^{T}A_{0} =\mathbf{0}; \label{eq:identities:c}
\end{equation}
\begin{equation}
\mathbf{u}^{T}B_{0}^{k}\mathbf{u} =0. \label{eq:identities:d}
\end{equation}
\end{subequations}

Substituting from \cref{eq:p_exp,eq:g_ij_exp} into \cref{eq:eq_P_vec}, and equating the $O(1)$ terms, we have
\begin{equation}
A_{0}(g_{ij}^{\mathbf{r}(0)})\mathbf{P}^{(0)}=0.
\label{eq:zero_order}
\end{equation}
Using \cref{eq:identities:b} we can write
\begin{equation}
\mathbf{P}^{(0)}=p(\mathbf{x})\mathbf{u},\label{eq:p_0}
\end{equation}
showing that, to leading order, the pressure is uniform across the unit cell.

Substituting from \cref{eq:p_0} into \cref{eq:eq_P_vec} and equating terms
of $O(\epsilon)$ we have
\begin{equation}
   A_{0}(g_{ij}^{\mathbf{r}(0)})\mathbf{P}^{(1)}+\left(A_{0}(g_{ij}^{\mathbf{r}(1)})+A_{1}(g_{ij}^{\mathbf{r}(0)})\right)p\mathbf{u}+\sum_{l=1}^{3}B_{0}^{l}(g_{ij}^{\mathbf{r}(0)})\frac{\partial p}{\partial x^{l}}\mathbf{u}  = 0.
\end{equation}
Then, using the identity in \cref{eq:identities:b} we have
\begin{equation}
  A_{0}(g_{ij}^{\mathbf{r}(0)})\mathbf{P}^{(1)}+\sum_{l=1}^{3}B^{l}(g_{ij}^{\mathbf{r}(0)})\frac{\partial p}{\partial x^{l}}\mathbf{u}  = 0.
\end{equation}
From \cref{eq:identities:a,eq:identities:b}, $A_{0}^{T}=A_{0}$
and $A_{0}\mathbf{u}=0$, such that
\begin{equation}
A_{0}^{T}(g_{ij}^{\mathbf{r}(0)})\mathbf{u}=\mathbf{0}
\label{eq:A_0_g_0_u}
\end{equation}
By appealing to the Fredholm alternative, we deduce that since \cref{eq:A_0_g_0_u} holds there is a solution for $\mathbf{P}^{(1)}$ only if 
\begin{equation}
\mathbf{u}^{T}\sum_{l=1}^{3}B_{0}^{l}(g_{ij}^{\mathbf{r}(0)})\frac{\partial p}{\partial x^{l}}\mathbf{u}=0,
\label{eq:solv_P1}
\end{equation}
where \cref{eq:identities:d} guarantees that the this condition is satisfied for any $p$.
Then, we can write the $O(\epsilon)$ contribution to the pressure as
\begin{equation}
\mathbf{P}^{(1)}=T(\mathbf{x})\mathbf{u}+\sum_{l=1}^{3}\mathbf{w}^{l}\frac{\partial p}{\partial x^{l}},\label{eq:p_1}
\end{equation}
where the two terms on the right-hand side represent homogeneous and particular solutions, $T$
is an arbitrary function, and $\mathbf{w}^{l}$ is any vector satisfying
\begin{equation}
A_{0}(g_{ij}^{\mathbf{r}(0)})\mathbf{w}^{l}=-B_{0}^{l}(g_{ij}^{\mathbf{r}(0)})\mathbf{u}.\label{eq:w_sol}
\end{equation}
Substituting from \cref{eq:p_0,eq:p_1} into \cref{eq:eq_P_vec},
and equating terms of $O(\epsilon^{2})$, we have
\begin{equation}
\begin{aligned}
& A_{0}(g_{ij}^{\mathbf{r}(0)})\mathbf{P}^{(2)}+\left(A_{0}(g_{ij}^{\mathbf{r}(1)})+A_{1}(g_{ij}^{\mathbf{r}(0)})\right)\sum_{l=1}^{3}\mathbf{w}^{l}\frac{\partial p}{\partial x^{l}}\\
& +\sum_{k=1}^{3}B_{0}^{k}(g_{ij}^{\mathbf{r}(0)})\frac{\partial}{\partial x^{k}}\left(T(\mathbf{x})\mathbf{u}+\sum_{l=1}^{3}\mathbf{w}^{l}\frac{\partial p}{\partial x^{l}}\right) +\sum_{l=1}^{3}B_{0}^{l}(g_{ij}^{\mathbf{r}(1)})\frac{\partial p}{\partial x^{l}}\mathbf{u} \\
& +\sum_{k=1}^{3}\sum_{l=1}^{3}C_{0}^{kl}(g_{ij}^{\mathbf{r}(0)})\frac{\partial^{2}p}{\partial x^{k}\partial x^{l}}\mathbf{u}+\sum_{k=1}^{3}B_{1}^{k}(g_{ij}^{\mathbf{r}(0)})\frac{\partial p}{\partial x^{k}}\mathbf{u} = 0,
\end{aligned}
\end{equation}
where we have used the identity in \cref{eq:identities:b}. Then, by the Fredholm alternative, we obtain the following solvability condition
\begin{equation}
\begin{aligned}
& \mathbf{u}^{T}A_{1}(g_{ij}^{\mathbf{r}(0)})\sum_{l=1}^{3}\mathbf{w}^{l}\frac{\partial p}{\partial x^{l}}+\sum_{k=1}^{3}\sum_{l=1}^{3}\mathbf{u}^{T}B_{0}^{k}(g_{ij}^{\mathbf{r}(0)})\frac{\partial\mathbf{w}^{l}}{\partial x^{k}}\frac{\partial p}{\partial x^{l}} \\
& + \mathbf{u}^{T}\sum_{k=1}^{3}\sum_{l=1}^{3}C_{0}^{kl}(g_{ij}^{\mathbf{r}(0)})\frac{\partial^{2}p}{\partial x^{k}\partial x^{l}}\mathbf{u}+\mathbf{u}^{T}\sum_{l=1}^{3}B_{1}^{k}(g_{ij}^{\mathbf{r}(0)})\frac{\partial p}{\partial x^{l}}\mathbf{u}=0,
\end{aligned}
\label{eq:solve_p_O_2}
\end{equation}
where the dependence on $g_{ij}^{\mathbf{r}(1)}$ has been eliminated by using the identities in \cref{eq:identities:c,eq:identities:d}.

We define the symmetric tensor $\boldsymbol{\mathcal{K}}$
\begin{equation}
\mathcal{K}^{kl}=\mathbf{u}^{T}C_{0}^{kl}(g_{ij}^{\mathbf{r}(0)})\mathbf{u}+\frac{1}{2}\mathbf{u}^{T}B_{0}^{k}(g_{ij}^{\mathbf{r}(0)})\mathbf{w}^{l}+\frac{1}{2}\mathbf{u}^{T}B_{0}^{l}(g_{ij}^{\mathbf{r}(0)})\mathbf{w}^{k}\label{eq:K_tensor}
\end{equation}
and the antisymmetric tensor $\boldsymbol{\Omega}$
\begin{equation}
\Omega^{kl}=\frac{1}{2}\mathbf{u}^{T}B_{0}^{k}(g_{ij}^{\mathbf{r}(0)})\mathbf{w}^{l}-\frac{1}{2}\mathbf{u}^{T}B_{0}^{l}(g_{ij}^{\mathbf{r}(0)})\mathbf{w}^{k}.
\end{equation}
Additionally, we define the vector $\boldsymbol{\xi}$,
\begin{equation}
\xi^{l}=\sum_{k=1}^{3}\mathbf{u}^{T}B_{0}^{k}(g_{ij}^{\mathbf{r}(0)})\frac{\partial\mathbf{w}^{l}}{\partial x^{k}}+\mathbf{u}^{T}A_{1}(g_{ij}^{\mathbf{r}(0)})\mathbf{w}^{l}+\mathbf{u}^{T}B_{1}^{l}(g_{ij}^{\mathbf{r}(0)})\mathbf{u}.\label{eq:xi_vector}
\end{equation}
Then, \cref{eq:solve_p_O_2} can be written as
\begin{equation}
\mathcal{K}^{kl}\frac{\partial^{2}p}{\partial x^{k}\partial x^{l}}+\xi^{l}\frac{\partial p}{\partial x^{l}}=0,\label{eq:Darcy_modified_sym}
\end{equation}
where the terms $\Omega^{kl}\tfrac{\partial^{2}p}{\partial x^{k}\partial x^{l}}$
vanish since $\boldsymbol{\mathbf{\Omega}}$ is an antisymmetric tensor.
\vspace{2mm}
\paragraph*{Identities}

We can simplify the terms that appear in \cref{eq:K_tensor,eq:xi_vector} as follows (see \cref{app:identities_2} for details):
\begin{subequations}\label{eq:identities_2}
\begin{equation}
\mathbf{u}^{T}A_{1}(g_{ij}^{\mathbf{r}(0)})\mathbf{w}^{l} =  \sum_{\mathbf{r} \in \mathcal{N}}\sum_{i=1}^{N}\sum_{j=1}^{N}\sum_{s=1}^{3}r^{s}L^{s}w_{j}^{l}\frac{\partial g_{ij}^{\mathbf{r}(0)}}{\partial x^{s}}; \label{eq:identities_2:a}
\end{equation}
\begin{equation}
\mathbf{u}^{T}B_{1}^{l}(g_{ij}^{\mathbf{r}(0)})\mathbf{u}=\frac{1}{2}\sum_{\mathbf{r} \in \mathcal{N}}\sum_{i=1}^{N}\sum_{j=1}^{N}\sum_{s=1}^{3}\left[r^{l}L^{l}r^{s}L^{s}+2x_{j}^{l}r^{s}L^{s}\right]\frac{\partial g_{ij}^{\mathbf{r}(0)}}{\partial x^{s}};
\label{eq:identities_2:b}
\end{equation}
\begin{equation}
\mathbf{u}^{T}C_{0}^{kl}(g_{ij}^{\mathbf{r}(0)})\mathbf{u} = \frac{1}{2}\sum_{\mathbf{r} \in \mathcal{N}}\sum_{i=1}^{N}\sum_{j=1}^{N}\left[r^{k}L^{k}r^{l}L^{l}+r^{l}L^{l}x_{j}^{k}+r^{k}L^{k}x_{j}^{l}\right]g_{ij}^{\mathbf{r}(0)}; \label{eq:identities_2:c}
\end{equation}
\begin{equation}
\mathbf{u}^{T}B_{0}^{k}(g_{ij}^{\mathbf{r}(0)})\mathbf{w}^{l} = \sum_{\mathbf{r} \in \mathcal{N}}\sum_{i=1}^{N}\sum_{j=1}^{N}r^{k}L^{k}w_{j}^{l}g_{ij}^{\mathbf{r}(0)}; \label{eq:identities_2:d}
\end{equation}
\begin{equation}
\mathbf{u}^{T}B_{0}^{k}(g_{ij}^{\mathbf{r}(0)})\frac{\partial\mathbf{w}^{l}}{\partial x^{k}}=\sum_{\mathbf{r} \in \mathcal{N}}\sum_{i=1}^{N}\sum_{j=1}^{N}\sum_{k=1}^{3}r^{k}L^{k}\frac{\partial w_{j}^{l}}{\partial x^{k}}g_{ij}^{\mathbf{r}(0)}. \label{eq:identities_2:e}
\end{equation}
\end{subequations}

Substituting from the identities in \cref{eq:identities_2:c,eq:identities_2:d} into \cref{eq:K_tensor} we have
\begin{equation}
\mathcal{K}^{kl}=\frac{1}{2}\sum_{\mathbf{r} \in \mathcal{N}}\sum_{i=1}^{N}\sum_{j=1}^{N}\left[r^{k}L^{k}r^{l}L^{l}+r^{l}L^{l}x_{j}^{k}+r^{k}L^{k}x_{j}^{l}+r^{k}L^{k}w_{j}^{l}+r^{l}L^{l}w_{j}^{k}\right]g_{ij}^{\mathbf{r}(0)}\label{eq:K_w}
\end{equation}

Following \cite{chapman2021}, we choose 
\begin{equation}
\mathbf{w}^{l}=-L^{l}\mathbf{v}^{l}-\mathbf{X}^{l},
\label{eq:v_def}
\end{equation}
where
$\mathbf{X}^{l}=(x_{1}^{l},...,x_{N}^{l})^{T}$ such that, from \cref{eq:w_sol}, we have
\begin{equation}
L^{l}A_{0}(g_{ij}^{\mathbf{r}(0)})\mathbf{v}^{l}=-A_{0}(g_{ij}^{\mathbf{r}(0)})\mathbf{X}^{l}+B_{0}^{l}(g_{ij}^{\mathbf{r}(0)})\mathbf{u}=\sum_{\mathbf{r} \in \mathcal{N}}\sum_{j=1}^{N}r^{l}L^{l}g_{ij}^{\mathbf{r}(0)}.
\label{eq:v_vector}
\end{equation}
Substituting from \cref{eq:v_def} into \cref{eq:K_w} we have
\begin{equation}
\mathcal{K}^{kl}=\frac{L^{k}L^{l}}{2L^{1}L^{2}L^{3}}\sum_{\mathbf{r} \in \mathcal{N}}\sum_{i=1}^{N}\sum_{j=1}^{N}\left[r^{k}r^{l}+r^{k}v_{i}^{l}+r^{l}v_{i}^{k}\right]g_{ij}^{\mathbf{r}(0)},
\end{equation}
where we have relabelled $i\leftrightarrow j$ and $\mathbf{r}\leftrightarrow-\mathbf{r}$
and then used the identity in \cref{eq:g_period}. We have also divided \cref{eq:solve_p_O_2}
by $L^{1}L^{2}L^{3}$ in order that $\mathcal{K}^{kl}$ will represent the permeability
tensor [see \cref{eq:U_macro_0} \textit{et seq.}].

It has been shown in \cite{chapman2021} that
\begin{equation}
\sum_{\mathbf{r} \in \mathcal{N}}\sum_{i=1}^{N}\sum_{j=1}^{N}r^{k}v_{i}^{l}g_{ij}^{\mathbf{r}(0)}=\sum_{\mathbf{r} \in \mathcal{N}}\sum_{i=1}^{N}\sum_{j=1}^{N}r^{l}v_{i}^{k}g_{ij}^{\mathbf{r}(0)}.
\end{equation}
Therefore, we can also write $\mathcal{K}^{kl}$ as
\begin{equation}
\mathcal{K}^{kl}=\frac{L^{k}L^{l}}{2L^{1}L^{2}L^{3}}\sum_{\mathbf{r} \in \mathcal{N}}\sum_{i=1}^{N}\sum_{j=1}^{N}\left[r^{k}r^{l}+2r^{k}v_{i}^{l}\right]g_{ij}^{\mathbf{r}(0)}.
\label{eq:K_kl}
\end{equation}

Substituting from \cref{eq:identities_2:a,eq:identities_2:b,eq:identities_2:e} into \cref{eq:xi_vector} we have
\begin{equation}
\xi^{l}=\frac{L^{l}}{L^{1}L^{2}L^{3}}\sum_{\mathbf{r} \in \mathcal{N}}\sum_{i=1}^{N}\sum_{j=1}^{N}\sum_{s=1}^{3}r^{s}L^{s}\left[\left(\frac{1}{2}r^{l}+v_{i}^{l}\right)\frac{\partial g_{ij}^{\mathbf{r}(0)}}{\partial x^{s}}+\frac{\partial v_{i}^{l}}{\partial x^{s}}g_{ij}^{\mathbf{r}(0)}\right]
\end{equation}
Taking the divergence of \cref{eq:K_kl} we have
\begin{equation}
\begin{aligned}
\frac{\partial \mathcal{K}^{kl}}{\partial x^{k}} & =\frac{L^{l}}{L^{1}L^{2}L^{3}}\sum_{\mathbf{r} \in \mathcal{N}}\sum_{i=1}^{N}\sum_{j=1}^{N}\sum_{k=1}^{3}r^{k}L^{k}\left[\left(\frac{1}{2}r^{l}+v_{i}^{l}\right)\frac{\partial g_{ij}^{\mathbf{r}(0)}}{\partial x^{k}}+\frac{\partial v_{i}^{l}}{\partial x^{k}}g_{ij}^{\mathbf{r}(0)}\right]\\
 & =\xi^{l}.
\end{aligned}
\end{equation}
Therefore, \cref{eq:Darcy_modified_sym} can be written as the
Darcy equation
\begin{equation}
\boldsymbol{\nabla} \cdot \left( \boldsymbol{\mathcal{K}} \cdot \boldsymbol{\nabla} p\right) = 0.
\label{eq:Darcy}
\end{equation}

\subsection{Macroscopic fluid velocity}

Following Chapman and Wilmott \cite{chapman2021}, the  
macroscopic Darcy velocity
in the $k$ direction, $U^{k}$, is given by
\begin{equation}
U^{k}=\frac{L^{k}}{2L^{1}L^{2}L^{3}}\sum_{\mathbf{r} \in \mathcal{N}}\sum_{i=1}^{N}\sum_{j=1}^{N}r^{k}q_{ij}^{\mathbf{r}}.
\label{eq:U_macro_0}
\end{equation}

Using the scaling introduced in \cref{sec:blood_flow_model}, equation \cref{eq:U_macro_0} defines the scale of the Darcy velocity as
\[
\mathbf{U^*} =  \frac{\pi}{128}\frac{d_0^{*4} \Delta p^*_0}{l_0^{*2} \Lambda^*_0 \mu^*_0} \mathbf {U},
\]
which is equivalent to the characteristic scale of the flow in a vessel divided by the scale of the cross-sectional area of the unit cell.
Substituting from \cref{eq:q_p} into \cref{eq:U_macro_0}, expanding $\mathbf{P}$
and $g^{\bf r}_{ij}$ as in \cref{eq:p_exp,eq:g_ij_exp}, and substituting from \cref{eq:p_0,eq:p_1}, we obtain
\begin{equation}
\begin{aligned}U^{k}= & -\frac{L^{k}}{2L^{1}L^{2}L^{3}\epsilon}\sum_{\mathbf{r} \in \mathcal{N}}\sum_{i=1}^{N}\sum_{j=1}^{N}r^{k}\left(g_{ij}^{\mathbf{r}}+\frac{1}{2}\epsilon\sum_{s=1}^{3}(x_{i}^{s}+x_{j}^{s}+r^{s}L^{s})\frac{\partial g_{ij}^{\mathbf{r}}}{\partial x^{s}} + O(\epsilon^2) \right)\\
&\qquad  \qquad \qquad \qquad \times\Bigg[p_{j}+\epsilon\sum_{l=1}^{3}(x_{j}^{l}+r^{l}L^{l})
\frac{\partial p_{j}}{\partial x^{l}}-p_{i}-\epsilon\sum_{l=1}^{3}x_{i}^{l}\frac{\partial p_{i}}{\partial x^{l}}+ O(\epsilon^2) \Bigg]\\
= & -\frac{L^{k}}{2L^{1}L^{2}L^{3}}\sum_{\mathbf{r} \in \mathcal{N}}\sum_{i=1}^{N}\sum_{j=1}^{N}r^{k}g_{ij}^{\mathbf{r}(0)} \\
& \qquad  \qquad \qquad \qquad \times \Biggl[p_{j}^{(1)}+\sum_{l=1}^{3}(x_{j}^{l}+r^{l}L^{l})\frac{\partial p_{j}^{(0)}}{\partial x^{l}}-p_{i}^{(1)}-\sum_{l=1}^{3}x_{i}^{l}\frac{\partial p_{i}^{(0)}}{\partial x^{l}}\Biggr]+O(\epsilon)\\
= & -\frac{L^{k}}{2L^{1}L^{2}L^{3}}\sum_{\mathbf{r} \in \mathcal{N}}\sum_{i=1}^{N}\sum_{j=1}^{N}\sum_{l=1}^{3}r^{k}g_{ij}^{\mathbf{r}(0)}\left[w_{j}^{l}+x_{j}^{l}+r^{l}L^{l}-w_{i}^{l}-x_{i}^{l}\right]\frac{\partial p}{\partial x^{l}}+O(\epsilon)\\
= & -\frac{L^{k}L^{l}}{2L^{1}L^{2}L^{3}}\sum_{\mathbf{r} \in \mathcal{N}}\sum_{i=1}^{N}\sum_{j=1}^{N}\sum_{l=1}^{3}r^{k}g_{ij}^{\mathbf{r}(0)}\left[v_{i}^{l}-v_{j}^{l}+r^{l}\right]\frac{\partial p}{\partial x^{l}}+O(\epsilon)\\
= & -\frac{L^{k}L^{l}}{2L^{1}L^{2}L^{3}}\sum_{\mathbf{r} \in \mathcal{N}}\sum_{i=1}^{N}\sum_{j=1}^{N}\sum_{l=1}^{3}g_{ij}^{\mathbf{r}(0)}\left[2r^{k}v_{i}^{l}+r^{k}r^{l}\right]\frac{\partial p}{\partial x^{l}}+O(\epsilon).
\end{aligned}
\label{eq:exp_U}
\end{equation}
In the fourth line we substituted $\mathbf{w}^l$ from \cref{eq:v_def} and in the fifth line we relabeled $j\rightarrow i$ and $\mathbf{r}\leftrightarrow-\mathbf{r}$
in the $v_j^l$ term, and then used the identity in \cref{eq:g_period}.
Combining \cref{eq:exp_U,eq:K_kl}, the Darcy velocity can be written as
\begin{equation}
\mathbf{U}=-\boldsymbol{\mathcal{K}} \cdot \boldsymbol{\nabla} p.
\label{eq:U_macro_K}
\end{equation}
Transforming back to dimensional variables, we have that the dimensional Darcy velocity is given by
\begin{equation}
\mathbf{U^*}=-\frac{\mathcal{K}_0^*}{\mu_0^*}\boldsymbol{\mathcal{K}} \cdot \boldsymbol{\nabla^*} p^*,
\label{eq:U_macro_dimensional}
\end{equation}
where
\[
\mathcal{K}_0^*=\frac{\pi}{128}\frac{d_0^{*4}}{l_0^{*2}}.
\]
Therefore, we verify that the macroscopic permeability has the appropriate dimensions of length squared.

We see from \cref{eq:Darcy,eq:U_macro_K} that the blood pressure and velocity at the \linebreak macroscale can be solved if the macroscopic permeability tensor, $\boldsymbol{\mathcal{K}}$, is known. \linebreak According to \cref{eq:K_kl}, $\boldsymbol{\mathcal{K}}$ depends on the leading-order vessel conductance
\begin{equation}
g_{ij}^{\mathbf{r}(0)}=\frac{d_{ij}^{4}}{l_{ij}\mu(h_{ij}^{\mathbf{r}(0)},\widetilde{d}_{ij})},
\label{eq:leading_order_conductance}
\end{equation}
which is a function of the leading order of the vessel haematocrit, $h_{ij}^{\mathbf{r}(0)}$. Consequently, we now turn to evaluate $h_{ij}^{\mathbf{r}(0)}$.

\section{Homogenisation of the haematocrit propagation} \label{sec:homog_haematocrit}

Under steady flow conditions, the haematocrit is uniform along a vessel [see \cref{eq:h_conserv}], although it changes between vessels due to the nonlinear
effect of phase separation occurring at flow bifurcations. In what
follows, we consider three different type of vessels that may exist within each
unit cell (see illustration in \cref{fig:vessel_types}): (i) vessels that originate from bifurcating flow junctions;
(ii) vessels that originate from converging flow junctions; (iii) vessels that originate from nodes in neighbouring unit cells and terminate inside the unit cell (henceforth we will refer to these vessels as ``inlet vessels''). Our goal is to determine a relation of the type in \cref{eq:h_general} between the haematocrit in a given vessel and the haematocrit and flow in its parent vessel(s). In what follows we explicitly evaluate the functional form of this relation for the three aforementioned types of vessels.

\subsection{Bifurcating flow junctions}

It was observed experimentally (e.g., \cite{carr1991influence,fenton1985nonuniform,klitzman1982capillary,pries1989red}) that, at flow bifurcations, the RBC component of blood does not generally split in proportion to the flow splitting; instead, RBC splitting appears to be a nonlinear function of the ratio of the flow rates at the junction, the local geometry of the vessels (specifically the diameters), and the haematocrit entering the bifurcating junction. The model of Pries et al. \cite{pries1989red} is a popular nonlinear empirical relation for representing haematocrit
splitting at flow bifurcations. We will use the \linebreak
aforementioned model with the later refinement of coefficients that was carried out in \cite{pries2005microvascular}. Henceforth, we will denote this model as the Pries model for brevity. Using the Pries model, we have that the ratio of the haematocrit flux, $h_J q_J$, entering a vessel $J$ to the haematocrit flux in its parent vessel, $h_{J_p} q_{J_p}$, satisfies
\begin{equation}
\frac{h_{J}}{h_{J_{p}}}\psi_{J/J_{p}}=\frac{\exp\left[ a_{J_{p}}D\left(d_{J}/d_{J'}\right)\right](\psi_{J/J_{p}}- c_{J_{p}})^{1+b_{J_{p}}}}{\exp\left[ a_{J_{p}}D\left(d_{J}/d_{J'}\right)\right](\psi_{J/J_{p}}- c_{J_{p}})^{1+ b_{J_{p}}}+(1-\psi_{J/J_{p}}- c_{J_{p}})^{1+ b_{J_{p}}}}.
\label{eq:f_first}
\end{equation}
Here, $\psi_{J/J_{p}}=q_{J}/q_{J_{p}}\in[0,1]$, where $q_{J}$
and $q_{J_{p}}$ represent the (absolute) magnitudes of the flows in vessels $J$ and $J_p$, respectively. The subscript $J'$ represents the sister branch of vessel $J$, while the known function $D(\cdot)$ has the property that $D\left(d_{J}/d_{J'}\right)=-D\left(d_{J'}/d_{J}\right)$ \cite{pries1989red,pries2005microvascular}. The haematocrit and flow in \cref{eq:f_first} are evaluated at specific vessel locations (either, $\mathbf{x}_J$ or $\mathbf{x}_{J_p}$). Additionally, $a_{J_{p}}$, $b_{J_{p}}$ and
$c_{J_{p}}$ are functions of the haematocrit and diameter of the
parent vessel
\[
\frac{a_{J_{p}}}{a}=\frac{b_{J_{p}}}{b}=\frac{c_{J_{p}}}{c}=\frac{1-h_{J_{p}}}{\widetilde{d}_{J_{p}}},
\]
where $a$, $b$, and $c$ are $O(1)$ constants and we use the micron-unit-scaled
diameter, $\widetilde{d}_{J_{p}}$, for consistency with the functional form of the Pries model (similar to the viscosity model by \cite{pries1994resistance}; see \cref{Poiseuille}). Most vessels have diameters much larger than one micron, so that $\widetilde{d}_{J_{p}} \gg 1$. Therefore, we may introduce the small parameter, $\alpha \epsilon$, to the definition of  $\widetilde{d}_{J_{p}}$as follows 
\begin{equation}
\widetilde{d}_{J_{p}}=\frac{d_{J_{p}}^{*}d^*_{0}}{d_{\mu}^*d^*_{0}}=\frac{1}{\alpha\epsilon}d_{J_{p}},
\label{eq:rescaling_d}
\end{equation}
where $d^*_{0}$ is the reference diameter of the vessels in the network ($\gg 1\unit{\mu m}$) and the  nondimensional, $O(1)$, vessel diameter is
\[
d_{J_{p}}=\frac{d_{J_{p}}^{*}}{d^*_{0}}.
\]
The small parameter is defined by
\[
\alpha\epsilon=\frac{d_{\mu}^*}{d^*_{0}}\ll1.
\]
Therefore, using the definition of $\epsilon$ in \cref{eq:epsilon}, we have that
\[
\alpha = \frac{(d_{\mu}^*/d^*_{0})}{(l_0^*/\Lambda_0^*)},
\]
representing the ratio of the two small parameters in the asymptotic expansion.

Introducing the rescaling in \cref{eq:rescaling_d} into \cref{eq:f_first} we have
\begin{equation}
\begin{aligned}
\frac{h_{J}}{h_{J_{p}}}\psi_{J/J_{p}}= & \exp\left[\alpha\epsilon a_{J_{p}}D\left(d_{J}/d_{J'}\right)\right](\psi_{J/J_{p}}-\alpha\epsilon c_{J_{p}})^{1+\alpha\epsilon b_{J_{p}}} \\
& \times \Bigg[ \exp\left[\alpha\epsilon a_{J_{p}}D\left(d_{J}/d_{J'}\right)\right](\psi_{J/J_{p}}-\alpha\epsilon c_{J_{p}})^{1+\alpha\epsilon b_{J_{p}}} \\
& \qquad  +(1-\psi_{J/J_{p}}-\alpha\epsilon c_{J_{p}})^{1+\alpha\epsilon b_{J_{p}}} \Bigg]^{-1}.
\end{aligned}
\label{eq:f_rescaled}
\end{equation}
Expanding \cref{eq:f_rescaled} in powers of $\epsilon$ and dividing by $\psi_{J/J_{p}}$ we have
\begin{equation}
\begin{aligned}
\frac{h_{J}}{h_{J_{p}}}=&1+\alpha\epsilon\left(\frac{1-h_{J_{p}}}{d_{J_{p}}}\right)\Bigg[b(1-\psi_{J/J_{p}})\ln\left(\frac{\psi_{J/J_{p}}}{1-\psi_{J/J_{p}}}\right)\\
&+c\left(2-\frac{1}{\psi_{J/J_{p}}}\right) + a (1-\psi_{J/J_{p}})D\left(d_{J}/d_{J'}\right)\Bigg]+O\left(\epsilon^2\right).
\end{aligned}
\label{eq:linearized_splitting}
\end{equation}
We note that this expansion holds provided that $D\left(d_{J}/d_{J'}\right) \lesssim O(1)$ and that $\psi_{J/J_{p}}$ is
not in the neighborhood of $\alpha\epsilon c_{J_{p}}$ or $1-\alpha\epsilon c_{J_{p}}$, namely $|\ln\left(\psi_{J/J_{p}}/(1-\psi_{J/J_{p}})\right)| \lesssim O(1)$. 

From \cref{eq:linearized_splitting} we note that, at leading order, the haematocrit in the daughter branches
equals the haematocrit in the parent branch. Consequently, the leading-order haematocrit is uniform across the unit cell. Accordingly, we now expand the haematocrit around the reference state in which the haematocrit is equal to $h(\mathbf{x})$ for all vessels in the unit cell
\begin{equation}
h_{J}=h+\epsilon h_{J}^{(1)}+O(\epsilon^{2}).
\label{eq:expansion_h_q}
\end{equation}
Similarly, we expand the flow around its leading order value
\begin{equation}
q_{J}=q_{J}^{(0)} + \epsilon q_{J}^{(1)}+O(\epsilon^{2}).
\label{eq:expansion_q}
\end{equation}
The absolute magnitude of the flow associated with the reference state, $q_{J}^{(0)}$, can be derived from the leading
order flow problem. Expanding \cref{eq:q_p} in a manner similar to that used to expand \cref{eq:exp_U} we have
\begin{equation}
q_{J}^{(0)} \equiv |q_{j_{0}j_{1}}^{\mathbf{r}(0)}|=g_{J}(h)\left|\sum_{l=1}^{3} L^{l} \left[v_{j_{0}}^{l}-v_{j_{1}}^{l}+r_{J}^{l}\right]\frac{\partial p}{\partial x^{l}} \right|,
\label{eq:q_J}
\end{equation}
where $j_{0}$ and $j_{1}$ are the end nodes of vessel $J$, and $\mathbf{r}_{J}$ is the cell-neighbour indicator associated with
vessel $J$. In \cref{eq:q_J}, we have used the fact that the leading order vessel \linebreak
conductance is a function of the leading order haematocrit, $g_{j_0j_1}^{\mathbf{r}(0)} \equiv g_J(h)$.

We note here that if $j_{0}$ and $j_{1}$ are the edge nodes of vessel
$J$, we define $\mathbf{x}_{J}$ as
\begin{equation}
\mathbf{x}_{J}=\frac{1}{2}(\mathbf{x}_{j_{0}}+\mathbf{x}_{j_{1}}+\mathbf{r}_{J}\circ\mathbf{L}) \equiv \overline{\mathbf{x}}^{\mathbf{r}_J}_{j_0 j_1},
\end{equation}
in accordance with the definition in \cref{eq:x_ij}.
We now expand the haematocrit in vessel $J$ located at $\mathbf{x}_{J}$
around the cell origin, $\mathbf{x}$:
\begin{equation}
h_{J}(\mathbf{x}+\epsilon\mathbf{x}_{J})=h_{J}(\mathbf{x})+\epsilon\sum_{s=1}^{3}x_{J}^{s}\frac{\partial h_{J}}{\partial x^{s}}|_{\mathbf{x}}+O(\epsilon^{2}).
\label{eq:x_exp_h}
\end{equation}

The $O(\epsilon)$ haematocrit in the daughter branches is then determined
by substituting \cref{eq:expansion_h_q,eq:expansion_q,eq:q_J,eq:x_exp_h} into \cref{eq:linearized_splitting}
\begin{equation}
\begin{aligned} 
 q_J^{(0)}\left( h_{J_{p}}^{(1)} - h_{J}^{(1)} \right) & =
   q_J^{(0)}\sum_{s=1}^{3}(x_{J}^{s}-x_{J_{p}}^{s})\frac{\partial h}{\partial x^{s}}\\
 & - \alpha h\left(\frac{1-h}{d_{J_{p}}}\right)q_J^{(0)}\Biggl[ b(1-\psi_{J/J_{p}}^{(0)})\ln\left(\frac{\psi_{J/J_{p}}^{(0)}}{1-\psi_{J/J_{p}}^{(0)}}\right) \\
 & +c\left(2-\frac{1}{\psi_{J/J_{p}}^{(0)}}\right)+a (1-\psi_{J/J_{p}}^{(0)}) D\left(d_{J}/d_{J'}\right)  
 \Biggr].
\end{aligned}
\label{eq:h_1}
\end{equation}
For brevity we write \cref{eq:h_1} as
\begin{equation}
q_J^{(0)}\left( h_{J_{p}}^{(1)} - h_{J}^{(1)} \right)=q_J^{(0)}\sum_{s=1}^{3}(x_{J}^{s}-x_{J_p}^{s})\frac{\partial h}{\partial x^{s}}-\alpha h(1-h)f_{J}^{(0)},
\label{eq:bifuracting_vessels}
\end{equation}
where
\begin{equation}
\begin{aligned} 
f_J^{(0)} & = \frac{q_J^{(0)}}{d_{J_{p}}}\Biggl[ b(1-\psi_{J/J_{p}}^{(0)})\ln\left(\frac{\psi_{J/J_{p}}^{(0)}}{1-\psi_{J/J_{p}}^{(0)}}\right) \\
 & +c\left(2-\frac{1}{\psi_{J/J_{p}}^{(0)}}\right)+a (1-\psi_{J/J_{p}}^{(0)}) D\left(d_{J}/d_{J'}\right)  
 \Biggr]
 \end{aligned}
 \label{eq:f_J_first}
\end{equation}
is a function containing details of the splitting rule at leading order.

\begin{lemma}
 $f_J^{(0)}=-f_{J'}^{(0)}$
\label{lemma_1}
 \end{lemma}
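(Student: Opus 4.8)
The plan is to establish the claim by a direct term-by-term comparison of $f_J^{(0)}$ and $f_{J'}^{(0)}$ as defined in \cref{eq:f_J_first}, exploiting three structural facts. The first is conservation of flow at the bifurcating node [cf.\ \cref{eq:sum_q}]: since $J$ and $J'$ are the two daughters of the common parent $J_p$, we have $q_J^{(0)}+q_{J'}^{(0)}=q_{J_p}^{(0)}$, so that if we abbreviate $\psi\equiv\psi_{J/J_p}^{(0)}$ then $\psi_{J'/J_p}^{(0)}=1-\psi$, together with $q_J^{(0)}=\psi\,q_{J_p}^{(0)}$ and $q_{J'}^{(0)}=(1-\psi)\,q_{J_p}^{(0)}$. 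The second is the antisymmetry of the geometric splitting function, $D(d_{J'}/d_J)=-D(d_J/d_{J'})$, recorded beneath \cref{eq:f_first}. The third is the elementary identity $\ln\!\big((1-\psi)/\psi\big)=-\ln\!\big(\psi/(1-\psi)\big)$.

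First I would factor the common prefactor $q_{J_p}^{(0)}/d_{J_p}$ out of both $f_J^{(0)}$ and $f_{J'}^{(0)}$, absorbing the flow prefactor $q_J^{(0)}=\psi q_{J_p}^{(0)}$ (respectively $q_{J'}^{(0)}=(1-\psi)q_{J_p}^{(0)}$) into the bracketed expression. The point of this rearrangement is that, once the $\psi$ (or $1-\psi$) factor is distributed, each of the three contributions becomes manifestly odd under the involution $\psi\mapsto 1-\psi$, $J\leftrightarrow J'$. Concretely, the $b$-term turns into $b\,\psi(1-\psi)\ln(\psi/(1-\psi))$, a symmetric prefactor multiplying an antisymmetric logarithm; the $c$-term turns into $c(2\psi-1)$, which is visibly odd since $2(1-\psi)-1=-(2\psi-1)$; and the $a$-term turns into $a\,\psi(1-\psi)D(d_J/d_{J'})$, a symmetric prefactor multiplying the sign-flipping function $D$. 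Summing the three and reinstating the prefactor then yields $f_{J'}^{(0)}=-f_J^{(0)}$.

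I expect the only genuine subtlety to be the $c$-term, whose raw factor $2-1/\psi$ is neither even nor odd in $\psi$; its antisymmetry emerges only after it is multiplied by the flow prefactor $\psi$, which converts $2-1/\psi$ into $2\psi-1$. This is precisely why the prefactor $q_J^{(0)}/d_{J_p}$ must be rewritten via flow conservation \emph{before} the terms are compared: without first invoking $q_J^{(0)}=\psi q_{J_p}^{(0)}$ and $\psi_{J'/J_p}^{(0)}=1-\psi$, the individual terms do not display the required symmetry, and the cancellation is obscured. The remaining steps are routine algebraic simplifications.
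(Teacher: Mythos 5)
Your proof is correct and rests on exactly the same three ingredients as the paper's own argument: conservation of flow at the bifurcation (giving $q_J^{(0)}=\psi\, q_{J_p}^{(0)}$, $q_{J'}^{(0)}=(1-\psi)\,q_{J_p}^{(0)}$, $\psi_{J'/J_p}^{(0)}=1-\psi$), the antisymmetry $D(d_{J'}/d_J)=-D(d_J/d_{J'})$, and the antisymmetry of the logarithm. The only difference is organisational — the paper transforms $f_J^{(0)}$ into $-f_{J'}^{(0)}$ through a chain of substitutions that keeps $q_J^{(0)}$ and $q_{J'}^{(0)}$ explicit, whereas you factor out the common prefactor $q_{J_p}^{(0)}/d_{J_p}$ and exhibit term-by-term oddness under the involution $\psi\mapsto 1-\psi$, $J\leftrightarrow J'$ — which is a slightly cleaner packaging of the same computation.
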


 \begin{proof}

 Starting from equation \cref{eq:f_J_first} and replacing $\psi_{J/J_{p}}^{(0)}=1-\psi_{J'/J_{p}}^{(0)}$ in the first and third terms of the equation, and using $\psi_{J/J_{p}}^{(0)}=q_J^{(0)}/q_{J_{p}}^{(0)}$ in the second term, we have
\begin{equation}
\begin{aligned} 
f_J^{(0)} & = \frac{1}{d_{J_{p}}}\Biggl[  -b q_J^{(0)}\psi_{J'/J_{p}}^{(0)}\ln\left(\frac{\psi_{J'/J_{p}}^{(0)}}{1-\psi_{J'/J_{p}}^{(0)}}\right) \\
 & +c\left(2q_J^{(0)}-q_{J_{p}}^{(0)}\right) + a  D\left(d_{J}/d_{J'}\right) q_J^{(0)}\psi_{J'/J_{p}}^{(0)}  
 \Biggr].
 \end{aligned}
\end{equation}

Using $q_J^{(0)} = q_{J_{p}}^{(0)}-q_{J'}^{(0)}$ we have
\begin{equation}
\begin{aligned} 
f_J^{(0)} & = \frac{1}{d_{J_{p}}}\Biggl[  -b  (q_{J_{p}}^{(0)}-q_{J'}^{(0)})\psi_{J'/J_{p}}^{(0)}\ln\left(\frac{\psi_{J'/J_{p}}^{(0)}}{1-\psi_{J'/J_{p}}^{(0)}}\right) \\
 & - c\left(2q_{J'}^{(0)}-q_{J_{p}}^{(0)}\right) + a  D\left(d_{J}/d_{J'}\right) q_{J'}^{(0)}(1-\psi_{J'/J_{p}}^{(0)})  
 \Biggr].
 \end{aligned}
\end{equation}
Using $\psi_{J'/J_{p}}^{(0)}=q_{J'}^{(0)}/q_{J_{p}}^{(0)}$ in the first and second terms and $ D\left(d_{J}/d_{J'}\right)=- D\left(d_{J'}/d_{J}\right)$ \cite{pries2005microvascular} in the third term we have
\begin{equation}
\begin{aligned} 
f_J^{(0)} & = -\frac{ q_{J'}^{(0)}}{d_{J_{p}}}\Biggl[  b(1-\psi_{J'/J_{p}}^{(0)})\ln\left(\frac{\psi_{J'/J_{p}}^{(0)}}{1-\psi_{J'/J_{p}}^{(0)}}\right) \\
 & + c \left(2-\frac{1}{\psi_{J'/J_{p}}^{(0)}}\right) + a D\left(d_{J'}/d_{J}\right)(1-\psi_{J'/J_{p}}^{(0)})  
 \Biggr].
 \end{aligned}
 \label{eq:f_J_final}
\end{equation}
On the other hand, replacing $J \rightarrow J'$ in \cref{eq:f_J_first} we have
\begin{equation}
\begin{aligned} 
f_{J'}^{(0)} & = \frac{ q_{J'}^{(0)}}{d_{J_{p}}}\Biggl[  b(1-\psi_{J'/J_{p}}^{(0)})\ln\left(\frac{\psi_{J'/J_{p}}^{(0)}}{1-\psi_{J'/J_{p}}^{(0)}}\right) \\
 & + c \left(2-\frac{1}{\psi_{J'/J_{p}}^{(0)}}\right) + a (1-\psi_{J'/J_{p}}^{(0)}) D\left(d_{J'}/d_{J}\right)  
 \Biggr].
 \end{aligned}
  \label{eq:f_J_prime}
\end{equation}

Comparing \cref{eq:f_J_final,eq:f_J_prime} we have that $f_J^{(0)}=-f_{J'}^{(0)}$ as required.
\end{proof}

\subsection{Converging flow junctions}

At nodes where the flow in two vessels \linebreak
converge into one, we have the following haematocrit flow
balance
\begin{equation}
q_{J}h_{J}|_{\mathbf{x}+\epsilon\mathbf{x}_{J}} = \sum_{J_{p}}q_{J_{p}}h_{J_{p}}|_{\mathbf{x}+\epsilon\mathbf{x}_{J_p}},
\end{equation}
where the summation is over the two parent vessels of vessel $J$. We expand $h_J$ and $q_J$ around their
reference state and expand $h_J(\mathbf{x}+\epsilon\mathbf{x}_J)$ around $\mathbf{x}$, as for \cref{eq:expansion_h_q,eq:expansion_q,eq:q_J,eq:x_exp_h}. We note that in \cref{eq:q_J}, $q_J^{(0)}$ is already given in $\mathbf{x}+\epsilon\mathbf{x}_J$ as a function of the leading order pressure and haematocrit, hence, it is not necessary to expand it around the cell origin. Accordingly, we obtain
\begin{equation}
\sum_{J_{p}}q_{J_p}^{(0)}h_{J_{p}}^{(1)}-q_{J}^{(0)}h_{J}^{(1)} = \sum_{s=1}^{3} \left(x_{J}^{s} q_{J}^{(0)} - \sum_{J_{p}}x_{J_{p}}^{s}q_{J_p}^{(0)} \right)\frac{\partial h}{\partial x^{s}},
\label{eq:convering_vessels}
\end{equation}
where we have used the fact that the net flow balance at each node is zero, i.e., $\sum_{J_{p}}q_{J_{p}}-q_{J}=0$.

\subsection{Inlet vessels}

Recalling the \hyperref[sec:sidebar]{sidebar} discussion in \cref{sec:homog_flow} regarding the periodicity of the network, we assume
\begin{equation}
h_{ij}^{\bf r}(\mathbf{x})=h_{ji}^{\bf -r}(\mathbf{x}),
\label{eq:periodicity_h_ij}
\end{equation}
in accordance with \cref{eq:g_period}. Transforming from node to vessel notations, \cref{eq:periodicity_h_ij} states that if $J$ is an inlet vessel having corresponding periodic outlet vessel
$\hat{J}$, then we have
\begin{equation}
h_{J}(\mathbf{x})=h_{\hat{J}}(\mathbf{x}),
\label{eq:periodicity_h}
\end{equation}
Taking the $O(\epsilon)$ terms from \cref{eq:periodicity_h} and multiplying by $q_{J}^{(0)}$ we have
\begin{equation}
q_{J}^{(0)}\left( h_{\hat{J}}^{(1)} - h_{J}^{(1)} \right) = 0.
\label{eq:inlet_vessels}
\end{equation}

\subsection{Asymptotic solution for the haematocrit}

Combining all types of \linebreak vessels, \cref{eq:bifuracting_vessels,eq:convering_vessels,eq:inlet_vessels}, into a single relation we have
\begin{equation}
\begin{aligned} 
& Y_{J}^{\mathrm{sp}}q_J^{(0)}\left( h_{J_{p}}^{(1)} - h_{J}^{(1)} \right) +  Y_{J}^{\mathrm{co}}\left(\sum_{J_{p}}q_{J_p}^{(0)}h_{J_{p}}^{(1)}-q_{J}^{(0)}h_{J}^{(1)} \right) + Y_{J}^{\mathrm{in}}q_{J}^{(0)}\left( h_{\hat{J}}^{(1)} - h_{J}^{(1)} \right)=\\
& Y_{J}^{\mathrm{sp}}\left(q_J^{(0)}\sum_{s=1}^{3}(x_{J}^{s}-x_{J_p}^{s})\frac{\partial h}{\partial x^{s}}-\alpha h(1-h)f_{J}^{(0)}\right) \\
& + Y_{J}^{\mathrm{co}}\sum{}_{s=1}^{3} \left(x_{J}^{s} q_{J}^{(0)} - \sum_{J_{p}}x_{J_{p}}^{s}q_{J_p}^{(0)} \right)\frac{\partial h}{\partial x^{s}},\;\;\;
\text{for }J=1,...,M.
\end{aligned}
\label{eq:h_splitting_merging}
\end{equation}
Here, $Y_{J}^{\mathrm{sp}}=1$ if vessel $J$ originates from a splitting
bifurcation and $Y_{J}^{\mathrm{sp}}=0$ otherwise; $Y_{J}^{\mathrm{co}}=1$ if vessel
$J$ originates from a converging bifurcation and  $Y_{J}^{\mathrm{co}}=0$ otherwise;
$Y_{J}^{\mathrm{in}}=1$ if vessel $J$ is an inlet vessel and $Y_{J}^{\mathrm{in}}=0$
otherwise. If we write the first-order vessel haematocrit as a vector, $\mathbf{h}^{(1)}=(h_{1}^{(1)},...,h_{M}^{(1)})^{T}$, then we obtain a system of $M$ equations for $h_J^{(1)}$ ($J=1,...,M$), which can be written in a matrix form as follows:
\begin{equation}
\mathbf{Q}\mathbf{h}^{(1)}= \sum_{s=1}^{3}\boldsymbol{\eta}^{s}\frac{\partial h}{\partial x^{s}} - \alpha h(1-h)\boldsymbol{\varpi},
\label{eq:h_1_vector_eq}
\end{equation}
where the matrix $\mathbf{Q}$ is given by
\begin{equation}
Q_{JK}=\begin{cases}
-q_J^{(0)}, & \  \text{if \ensuremath{J=K}}\\
\;\;\; q_J^{(0)}, & \ \text{if \ensuremath{K} is the parent vessel of \ensuremath{J} at a splitting junction} \\
& \ \  \text{or \ensuremath{K} is the periodic outlet vessel of inlet \ensuremath{J}} \\
\;\;\; q_{K}^{(0)}, &\ \text{if} \ K \text{ is the parent vessel of \ensuremath{J} at a converging junction}
\end{cases}
\end{equation}
The vectors $\boldsymbol{\eta}^{s}$ and $\boldsymbol{\varpi}$ read
\begin{equation}
\begin{aligned}\eta_{J}^{s}= & Y_{J}^{\mathrm{sp}}q_J^{(0)}(x_{J}^{s}-x_{J_p}^{s}) + Y_{J}^{\mathrm{co}}\left(q^{(0)}_{J}x_{J}^{s} - \sum_{J_{p}}q^{(0)}_{J_p}x_{J_{p}}^{s}\right) \;\; \text{and} \\
\varpi_{J}= & Y_{J}^{\mathrm{sp}}f_{J}^{(0)}.
\end{aligned}
\label{eq:eta_zeta_etc}
\end{equation}

It is straightforward to show that each column $K$ of $Q_{JK}$ has either 2 or 3 nonzero entries according to how many daughter vessels it has: (i) $(-q_K^{(0)}, q_K^{(0)})$ if $K$ is either an outlet vessel or one of the parent vessels in a converging-flow junction; (ii) $(-q_K^{(0)}, q_{K_{d_{1}}}^{(0)}, q_{K_{d_{2}}}^{(0)})$, if $K$ is a parent vessel at a splitting junction where $K_{d_{1}}$ and $K_{d_{2}}$ are the indices of the daughter vessels. Therefore, due to conservation of flow at bifurcations, we have
\begin{equation}
\mathbf{Q}^T\mathbf{u} = \boldsymbol{0},
\end{equation}
where we have used the $[M \times 1]$ vector $\mathbf{u}=(1,...,1)^{T}$.
Accordingly, we can use the Fredholm alternative to write the solvability condition of \cref{eq:h_1_vector_eq} as a partial differential equation for the propagation of the leading-order haematocrit at the macroscale,
\begin{equation}
\mathbf{u}^{T}\left(\sum_{s=1}^{3}\boldsymbol{\eta}^{s}\frac{\partial h}{\partial x^{s}}  - \alpha h(1-h)\boldsymbol{\varpi} \right)=0.\label{eq:solv_cond}
\end{equation}
Finally, \cref{eq:solv_cond} can be simplified to obtain
\begin{equation}
\boldsymbol{\mathcal{C}}\cdot\boldsymbol{\nabla}h=\alpha h(1-h)\mathcal{S} ,
\label{eq:h_macro_full}
\end{equation}
where
\begin{equation}
\mathcal{C}^s=\mathbf{u}^T\boldsymbol{\eta}^{s},\;\;\text{and}\;\;\mathcal{S}=\mathbf{u}^T\boldsymbol{\varpi}.
\label{eq:C_and_S}
\end{equation}

\begin{lemma}
$\mathcal{S}=0$
\label{lemma_2}
\end{lemma}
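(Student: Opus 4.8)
The plan is to unpack the definition $\mathcal{S} = \mathbf{u}^T\boldsymbol{\varpi} = \sum_{J=1}^{M}\varpi_J = \sum_{J=1}^{M} Y_J^{\mathrm{sp}} f_J^{(0)}$ and recognise that the indicator $Y_J^{\mathrm{sp}}$ restricts this sum to exactly those vessels that emerge from a bifurcating (splitting) flow junction. First I would note that every such vessel is one of the two daughters of some parent vessel, and that the daughters of a given bifurcation form a sister pair $(J, J')$. Under the standard assumption that bifurcations are binary (one parent, two daughters, as in the Pries splitting model of \cref{eq:f_first}), the set of all splitting-origin vessels partitions cleanly into these sister pairs, with each vessel belonging to exactly one pair.

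The key step is then to reorganise the sum by bifurcation rather than by vessel. Writing the sum over splitting-origin vessels as a sum over bifurcations $\beta$, each contributing its two daughters $J_\beta$ and $J'_\beta$, we obtain
\begin{equation}
\mathcal{S} = \sum_{J=1}^{M} Y_J^{\mathrm{sp}} f_J^{(0)} = \sum_{\beta}\left( f_{J_\beta}^{(0)} + f_{J'_\beta}^{(0)} \right).
\label{eq:S_pairs}
\end{equation}
Invoking \cref{lemma_1}, which establishes $f_J^{(0)} = -f_{J'}^{(0)}$ for any sister pair, each bracketed term in \cref{eq:S_pairs} vanishes identically, and hence $\mathcal{S} = 0$.

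The main obstacle I anticipate is not the algebra but the bookkeeping of the pairing: one must argue that the restriction by $Y_J^{\mathrm{sp}}$ picks out precisely the daughter vessels and that these group into disjoint sister pairs covering the sum exactly once, so that no vessel is omitted or double-counted. This requires the binary-bifurcation assumption to be invoked explicitly and, if one wishes to be careful, a brief remark that a vessel cannot simultaneously be a daughter at two distinct bifurcations (each vessel has a unique upstream junction under the flow orientation fixed earlier, cf.\ \cref{fig:vessel_types}). Once the pairing is justified, the result follows immediately from \cref{lemma_1} and carries the interpretation that the haematocrit-splitting corrections cancel pairwise at leading order, which is why the macroscopic haematocrit transport equation \cref{eq:h_macro_full} reduces to a purely advective balance.
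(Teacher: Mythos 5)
Your proof is correct and follows essentially the same route as the paper's: both identify $\mathcal{S}=\sum_{J}Y_J^{\mathrm{sp}}f_J^{(0)}$ as a sum over daughters of splitting junctions and invoke \cref{lemma_1} so that each sister pair cancels. Your additional bookkeeping (explicitly partitioning the splitting-origin vessels into disjoint sister pairs under the binary-bifurcation assumption) merely makes precise what the paper states in one sentence, so no further changes are needed.
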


\begin{proof}
Combining \cref{eq:eta_zeta_etc,eq:C_and_S} we have
\begin{equation}
\mathcal{S} =\mathbf{u}^T\boldsymbol{\varpi}=\sum_{J=1}^{M}Y_{J}^{\mathrm{sp}}f_{J}^{(0)}.
\end{equation}

From \cref{lemma_1} we have that daughter vessels have equal and opposite values of $f_J^{(0)}$, such that they cancel each other when all splitting vessels are summed together. Consequently we have $\mathcal{S}=0$.
\end{proof}

\newpage

\begin{lemma}
$\boldsymbol{\mathcal{C}}/(L^1L^2L^3)=\mathbf{U}$
\label{lemma_3}
\end{lemma}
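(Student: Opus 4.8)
The plan is to evaluate $\mathcal{C}^s=\mathbf{u}^T\boldsymbol{\eta}^s=\sum_{J=1}^M\eta_J^s$ directly from the definition in \cref{eq:eta_zeta_etc,eq:C_and_S} and to rearrange the resulting sum, using conservation of flow at each junction, into the form of the Darcy velocity \cref{eq:U_macro_0}. The physical content of the claim is that the leading-order macroscopic haematocrit is advected at exactly the macroscopic blood velocity, so the proof amounts to a careful telescoping of the nodal flux balances, entirely analogous to the cancellation already used to establish $\mathbf{Q}^T\mathbf{u}=\mathbf{0}$.

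First I would split each $\eta_J^s$ into its \emph{own} contribution $+q_J^{(0)}x_J^s$ (present whenever $J$ originates from a splitting or converging junction, i.e.\ for every non-inlet vessel) and its \emph{parent} contributions ($-q_J^{(0)}x_{J_p}^s$ at a splitting junction, $-\sum_{J_p}q_{J_p}^{(0)}x_{J_p}^s$ at a converging junction). Summing the parent contributions over all vessels and grouping them by junction, conservation of flow (the parent flux equals the sum of daughter fluxes at a splitting node, and the net balance at a converging node) collapses each group to $q_K^{(0)}x_K^s$ for the parent vessel $K$. Since a vessel acts as a parent precisely when its downstream node lies inside the unit cell, the outcome is
\[
\mathcal{C}^s=\sum_{J\ \mathrm{not\ inlet}}q_J^{(0)}x_J^s-\sum_{K\ \mathrm{not\ outlet}}q_K^{(0)}x_K^s .
\]

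Next I would observe that every cell-internal vessel is both ``not inlet'' and ``not outlet'' and therefore cancels between the two sums, leaving only the boundary-crossing vessels,
\[
\mathcal{C}^s=\sum_{K\ \mathrm{outlet}}q_K^{(0)}x_K^s-\sum_{J\ \mathrm{inlet}}q_J^{(0)}x_J^s .
\]
The periodicity relation \cref{eq:periodicity_h} pairs each outlet vessel $K=j_0j_1^{\mathbf{r}}$ with its inlet image $\hat K=j_1j_0^{-\mathbf{r}}$; these carry equal flux magnitudes, $q_K^{(0)}=q_{\hat K}^{(0)}$, while their midpoints satisfy $x_K^s-x_{\hat K}^s=r^sL^s$ directly from the definition in \cref{eq:x_ij}. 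Substituting this bijective pairing gives $\mathcal{C}^s=L^s\sum_{K\ \mathrm{outlet}}r^sq_K^{(0)}$.

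Finally I would match this against \cref{eq:U_macro_0}. The double sum there runs over every node pair and displacement and hence counts each boundary vessel together with its periodic image; tracking the sign conventions (the $q_J^{(0)}$ in $\boldsymbol{\eta}^s$ are flux \emph{magnitudes}, whereas the $q_{ij}^{\mathbf{r}(0)}$ in the Darcy velocity are \emph{signed}), both members of each image pair are seen to contribute $r^sq_K^{(0)}$, so that $\sum_{\mathbf{r}\in\mathcal{N}}\sum_{i,j}r^sq_{ij}^{\mathbf{r}(0)}=2\sum_{K\ \mathrm{outlet}}r^sq_K^{(0)}$. Comparing with the factor $1/2$ in \cref{eq:U_macro_0} then yields $\mathcal{C}^s=L^1L^2L^3\,U^s$, which is the claim. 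I expect the main obstacle to be precisely this last bookkeeping step: identifying which vessels are ``owned'' by the cell through their cell-side node, reconciling the magnitude-versus-signed-flow conventions, and confirming that the outlet/inlet image pairing reproduces exactly the double-counting implicit in \cref{eq:U_macro_0}.
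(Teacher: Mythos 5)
Your proposal is correct and follows essentially the same route as the paper's proof: cancellation of internal-vessel contributions via flow conservation at junctions, reduction of $\mathcal{C}^s$ to the inlet/outlet (boundary-crossing) vessels, pairing them through periodicity so that the midpoint differences produce $r^sL^s$, and matching the factor of $2$ in the double sum of \cref{eq:U_macro_0}. The only difference is one of presentation: the paper states the internal-vessel cancellation and the inlet--outlet pairing in a single step, whereas you make explicit the telescoping, the magnitude-versus-signed-flow conventions, and the double-counting bookkeeping that the paper leaves implicit.
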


\begin{proof}

Dividing $\mathcal{C}^s$ by $(L^1L^2L^3)$ and using \cref{eq:eta_zeta_etc} we have
\begin{equation}
\begin{aligned}
& \frac{\mathcal{C}^s}{L^1L^2L^3}=\frac{\mathbf{u}^T\boldsymbol{\eta}^{s}}{L^1L^2L^3} = \\
& \frac{1}{L^1L^2L^3}\sum_{J=1}^{M}\left[ Y_{J}^{\mathrm{sp}}q_{J}^{(0)}(x_{J}^{s} - x_{J_{p}}^{s})+Y_{J}^{\mathrm{co}}\left(q_{J}^{(0)}x_{J}^{s} - \sum_{J_{p}}q_{J_p}^{(0)}x_{J_{p}}^{s}\right) \right].
\end{aligned}
\label{eq:proof_U_1}
\end{equation}
We note that internal vessels do not contribute to this sum because they are both daughter and parent vessels such that their contributions, appearing with both \linebreak
positive and negative signs, are cancelled out. Therefore, \cref{eq:proof_U_1} can be written as
\begin{equation}
\frac{\mathcal{C}^s}{L^1L^2L^3} =  \frac{1}{L^1L^2L^3}\sum_{J=1}^{M}Y^{\mathrm{in}}q_{J}^{(0)}(x_{\hat{J}}^{s} - x_{J}^s)= \frac{L^s}{2L^1L^2L^3}\sum_{\mathbf{r}}\sum_{i=1}^{N}\sum_{j=1}^{N} r^sq_{ij}^{\mathbf{r}(0)} = U^s,
\end{equation}
where $\hat{J}$ is the periodic outlet vessel corresponding to inlet vessel $J$, and we have used the definition of $U^s$ from \cref{eq:U_macro_0}. We have also used that, at leading order, $q_{J}^{(0)} = q_{\hat{J}}^{(0)}$ [see \cref{eq:q_J}]. Thus, we have proved that the Darcy velocity, $\mathbf{U}$, is equal to the haematocrit advection vector.
\end{proof}

Finally, we apply \cref{lemma_2,lemma_3} to equation \cref{eq:h_macro_full} to obtain the final form of the homogenised equation for haematocrit propagation
\begin{equation}
\mathbf{U}\cdot\boldsymbol{\nabla}h=0.
\label{eq:h_macro_final}
\end{equation}

\section{Summary of the homogenised model}
The following is the final \linebreak
homogenised model for macroscopic blood flow and haematocrit propagation \linebreak
developed in this manuscript:
\begin{subequations}\label{eq:final_model}
\begin{equation}
\boldsymbol{\nabla} \cdot \left( \boldsymbol{\mathcal{K}}(h) \cdot \boldsymbol{\nabla} p\right) = 0, \label{eq:Darcy_final}
\end{equation}
\begin{equation}
\mathbf{U}=-\boldsymbol{\mathcal{K}}(h)\cdot \boldsymbol{\nabla}p
\label{eq:Darcy_velocity_final}
\end{equation}
and
\begin{equation}
\mathbf{U}\cdot\boldsymbol{\nabla}h=0.
\label{eq:haematocrit_final}
\end{equation}
\end{subequations}
Here, $p(\mathbf{x})$, $\mathbf{U}(\mathbf{x})$, and $h(\mathbf{x})$ are the pressure, flow velocity, and haematocrit, \linebreak
respectively, at the macroscale; $\boldsymbol{\mathcal{K}}(h)$ is the macroscopic permeability tensor given by
\begin{equation}
\mathcal{K}^{kl}=\frac{L^{k}L^{l}}{2L^{1}L^{2}L^{3}}\sum_{\mathbf{r} \in \mathcal{N}}\sum_{i=1}^{N}\sum_{j=1}^{N}\frac{ d_{ij}^{4}\left(r^{k}r^{l}+2r^{k}v_{i}^{l}\right)}{l_{ij}\mu(h,\widetilde{d}_{ij})},
\label{eq:K_final}
\end{equation}
where $\mathbf{v}^{l}$ is a solution to the linear system of equations:
\begin{equation}
A_{0}(g_{ij}^{\mathbf{r}(0)})\mathbf{v}^{l}=\sum_{\mathbf{r} \in \mathcal{N}}\sum_{j=1}^{N}r^{l}g_{ij}^{\mathbf{r}(0)}.
\label{eq:v_vector_final}
\end{equation}
Here,
\[
A_{0}(g_{ij}^{\mathbf{r}(0)})=\sum_{\mathbf{r} \in \mathcal{N}}\left[g_{ij}^{\mathbf{r}(0)}-\delta_{ij}\sum_{m=1}^{N}g_{im}^{\mathbf{r}(0)}\right]
\]
and
\[
g_{ij}^{\mathbf{r}(0)} = \frac{d_{ij}^4}{l_{ij}\mu(h,\widetilde{d}_{ij})}.
\]

\section{Probability distribution of the macroscopic permeability\label{sec:permeability}}
The \linebreak
macroscopic permeability tensor, given by \cref{eq:K_final}, is a function of the macroscopic haematocrit, $h$, the distribution of vessel diameters and lengths, and the topology of the unit cell. In this section we illustrate how these features affect the components of the permeability tensor. We consider two different network topologies: two-dimensional honeycomb networks (\cref{sec:honeycomb}) and three-dimensional networks having vessels oriented in random directions (\cref{sec:random_3D}). For each network topology, we assume that the distribution of vessel diameters is log-normal, with the following probability distribution function:
\begin{equation}
f(\widetilde{d}) = \frac{1}{\widetilde{d}\sigma_{LN}\sqrt{2\pi}}\exp\left[-\frac{\left(\ln \widetilde{d}-\eta_{LN}\right)^2}{2\sigma^2_{LN}}\right],
\label{eq:log_normal}
\end{equation}
where $\widetilde{d} = d^*/d^*_{\mu}$ and $d^*_{\mu}=1\,\unit{\mu m}$.
For all results presented in this section, we fix $\sigma_{LN}=0.3$, and $\langle\widetilde{d}\rangle=20$ (here, and henceforth, angular brackets denote mean values), such that $\ln (\eta_{LN})=\langle\widetilde{d}\rangle e^{-\sigma^2_{LN}/2} \approx 19.12$. Using these parameters, the probability of obtaining $\widetilde{d}<10$ is very small ($\approx 1.5\%$), and most vessels satisfy $\widetilde{d} \gg 1$, as required for the validity of the asymptotic expansion [see \cref{eq:rescaling_d} \textit{et seq.}].

\subsection{Two-dimensional honeycomb networks\label{sec:honeycomb}}
We generated periodic unit cells of honeycomb networks having $N_{\mathrm{row}}$ rows of hexagons as illustrated in \cref{fig:honeycomb_illustration} for the case of $N_{\mathrm{row}}=5$. We used uniform vessel lengths ($l=1$ in nondimensional units) and a diameter distribution as described above [see \cref{eq:log_normal}].

We ran $10^6$ realizations, in each of which we evaluated the macroscopic \linebreak
permeability [using \cref{eq:K_final}] of a honeycomb network ($N_{\mathrm{row}}=9$) with vessel diameters chosen at random using the log-normal distribution defined by \cref{eq:log_normal}; each realization had a fixed
haematocrit in all vessels in the range $h \in [0,0.7]$. The results presented in \cref{fig:K_scatter} show how the diagonal and off-diagonal entries of the permeability tensor are
distributed. Due to the intrinsic symmetries of the honeycomb network, the
distribution of the diagonal components is identical. Therefore, \cref{fig:K_scatter}(a) presents the distribution of $\mathcal{K}^{11}$ and $\mathcal{K}^{22}$, which we denote as $\mathcal{K}^{nn}$. Also, due to the symmetries of the honeycomb-network, the mean value of the off-diagonal components (denoted $\mathcal{K}^{nt}$, where $\mathcal{K}^{nt}=\mathcal{K}^{tn}$ due to the symmetry of the permeability tensor) is zero. This is readily apparent in \cref{fig:K_scatter}(b).

Curves presenting the permeability tensor components as a function of $h$ for the case of uniform vessel diameters are presented in \cref{fig:K_scatter}. Due to the symmetry of the honeycomb network, the uniform vessel diameter results in $\langle\mathcal{K}^{nt}\rangle=0$. For the diagonal terms, \cref{fig:K_scatter}(a) shows that a uniform diameter which equals the mean of the distribution ($\widetilde{d}=20$) produced a larger permeability than the mean of the distribution. This is due to the nonlinear dependence of the vessel conductance on the vessel diameter, which leads to vessels smaller than the mean value producing a greater reduction in the permeability than increases caused by vessels that are larger than the mean value. Even so, it is possible to obtain a good approximation for the mean value by using a uniform value for the vessel diameters which is smaller than the mean of the distribution. 

\begin{figure}[H]
\includegraphics[scale=0.45]{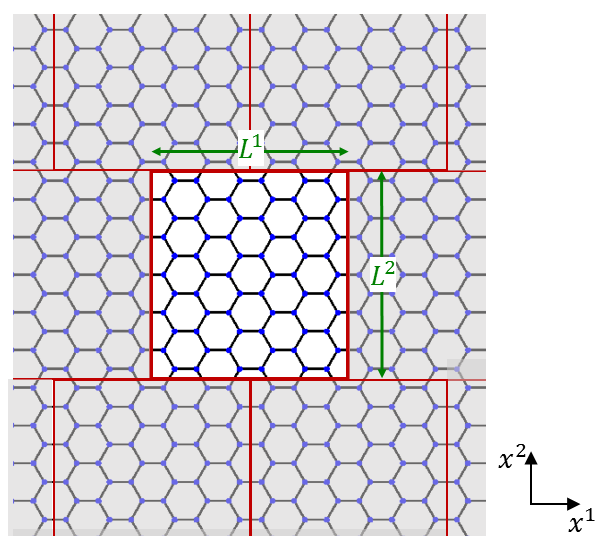}
\centering{}\caption{Schematic of a 2D honeycomb network in the $\left(x^1,x^2\right)$ plane composed of unit cells having $N_{\mathrm{row}}=5$ rows of hexagons. Each unit cell has six neighbouring cells corresponding to (clockwise from the top-left cell) $\mathbf{r}=\left(-\frac{1}{2},1\right),\left(\frac{1}{2},1\right),\left(1,0\right),\left(\frac{1}{2},-1\right),\left(-\frac{1}{2},-1\right),\left(-1,0\right)$.
\label{fig:honeycomb_illustration}}
\end{figure}

\begin{figure}[H]
\includegraphics[scale=0.26]{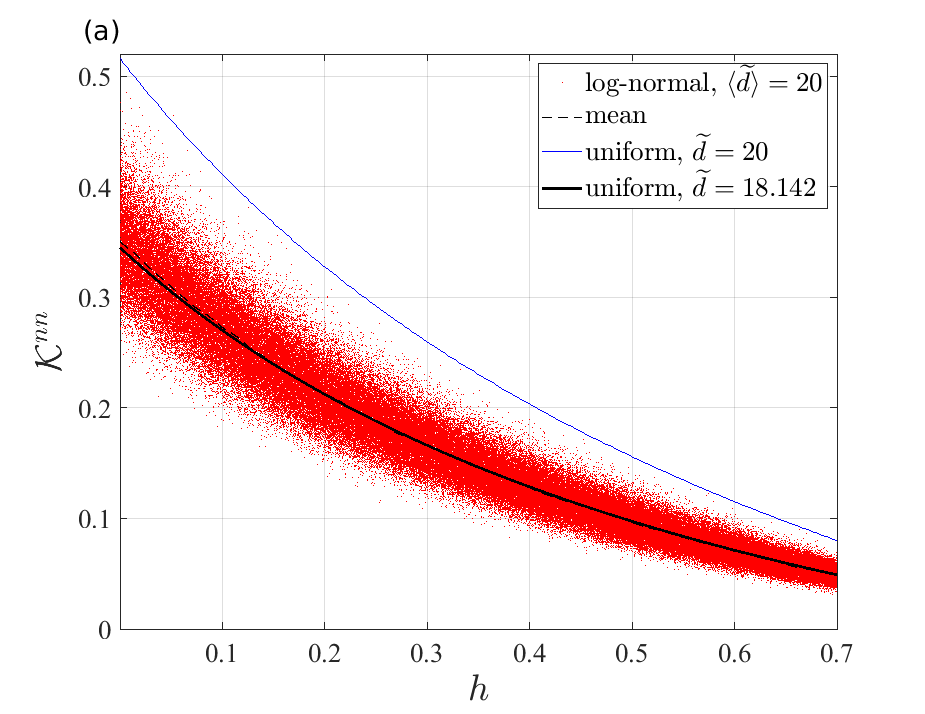}\hfill{}\includegraphics[scale=0.26]{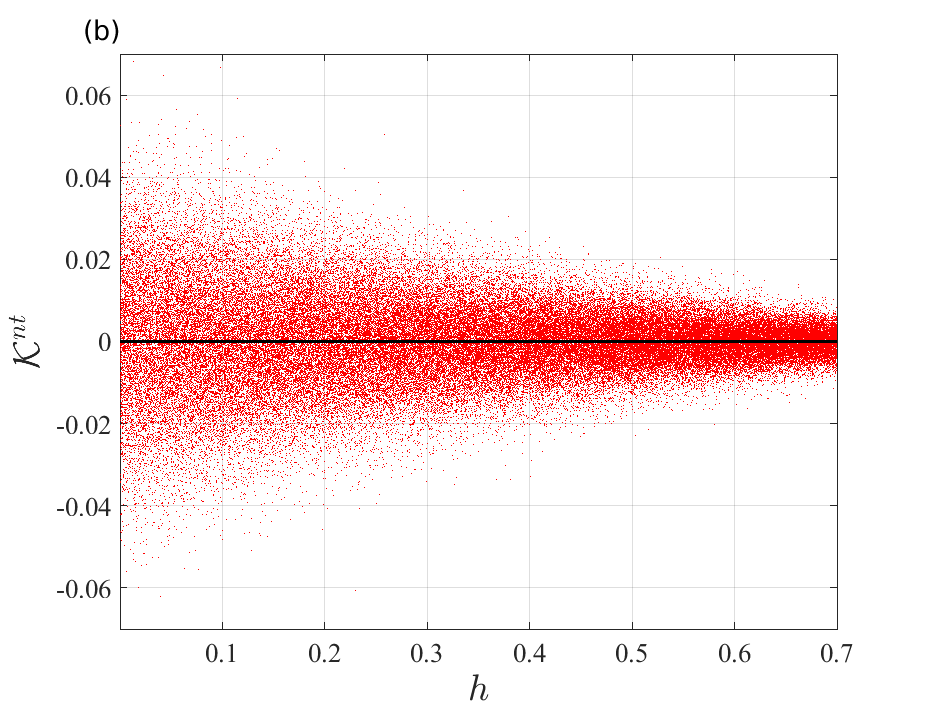}
\centering{}\caption{Scatter plots showing how the diagonal and off-diagonal component elements of the permeability tensor, (a)~$\mathcal{K}^{nn}$ and (b)~$\mathcal{K}^{nt}$, respectively, depend on the macroscopic haematocrit, $h$. The results were generated from $10^6$ realizations of a honeycomb network in which the unit cell had $N_{\mathrm{row}}=9$ rows (190 nodes, 304 vessels) and where the vessel diameters were distributed according to a log-normal distribution with mean diameter, $\langle\widetilde{d}\rangle=20$, and $\sigma_{LN}=0.3$. The solid lines show the permeability for cases in which uniform vessel diameters were used; the mean of the distribution of the permeability is given by the dashed-black line [coincides with the uniform-vessel-diameter line in \cref{fig:K_scatter}(b)].
\label{fig:K_scatter}}
\end{figure}

When evaluating the macroscopic permeability (per unit length) of a honeycomb unit-cell having uniform vessel diameters, we can take the conductance out of the summation in \cref{eq:K_final} to obtain
\begin{equation}
\mathcal{K}^{kl}=\frac{L^{k}L^{l}}{2L^{1}L^{2}}\frac{(\widetilde{d}/\widetilde{d}_0)^{4}}{\mu(h,\widetilde{d})}\sum_{\mathbf{r} \in \mathcal{N}}\sum_{i=1}^{N}\sum_{j=1}^{N} \left(r^{k}r^{l}+2r^{k}v_{i}^{l}\right).
\label{eq:K_uniform_d}
\end{equation}
Here we choose the reference diameter to be the mean diameter so that the \linebreak
nondimensional parameter, $\widetilde{d}_0=20$, represents the reference network diameter scaled by the reference viscosity-function diameter. The vessel length is absent from \cref{eq:K_uniform_d} because it is uniform in honeycomb networks.  For an equal-vessel-diameter \linebreak
honeycomb network, the summation in \cref{eq:K_uniform_d} yields
\begin{equation}
\mathcal{K}^{nn} \approx 0.577\frac{(\widetilde{d}/\widetilde{d}_0)^{4}}{\mu(h,\widetilde{d})} \ \ \text{and} \ \ \mathcal{K}^{nt}=0.
\label{K_uniform_honeycomb}
\end{equation}
On the other hand, if we wish to approximate the mean of the distribution of $\mathcal{K}^{nn}$ using a representative uniform diameter, we expect that
\begin{equation}
\langle \mathcal{K}^{nn} \rangle \propto \left[\mu(h,\widetilde{d}=\mathrm{const})\right]^{-1}.
\end{equation}
Using the explicit expression for $\mu$ (see \cref{app:blood_viscosity}), we fitted the following functional form for $\langle \mathcal{K}^{nn}(h) \rangle$ to the random-diameter data in \cref{fig:K_scatter},
\begin{equation}
\langle \mathcal{K}^{nn}(h) \rangle = \frac{c_1}{1+c_2 \left[(1-h)^{c_3}-1 \right]},
\label{K_fit}
\end{equation}
and obtained $c_1\approx0.345$, $c_2\approx2.359$, and $c_3\approx-1.059$. Then, the representative uniform diameter of the random distribution is obtained by substituting the explicit form for $\mu(h,\widetilde{d})$ from \cref{eq:blood_viscosity} into \cref{K_uniform_honeycomb} and equating to \cref{K_fit} for the case of $h=0$. The resulting equation is
\begin{equation}
\widetilde{d}^2(\widetilde{d}-1.1)^2=\frac{0.345\widetilde{d}^4_0}{0.577}.
\label{eq:representative_diameter}
\end{equation}
Solving \cref{eq:representative_diameter} numerically we obtained $\widetilde{d} \approx 18.142$.
The solid black line in \cref{fig:K_scatter}(a) shows $\mathcal{K}^{nn}$ for a uniform diameter, $\widetilde{d}=18.142$, while the dashed-black line presents the mean, $\langle\mathcal{K}^{nn}\rangle$, of the random vessel diameter distribution. The two curves are almost indistinguishable, showing that this uniform diameter provides an excellent approximation to the mean of the distribution.

It is noticeable in \cref{fig:K_scatter} that the standard deviation of the permeability decreases as the haematocrit value increases. To better illustrate this trend, \linebreak \cref{fig:probability}(a,b) show the probability density distribution of the permeability tensor components for two values of haematocrit, $h=0.1,0.6$, while \cref{fig:probability}(c) presents the standard deviation and the coefficient of variation of the permeability tensor components as a function of the haematocrit value. We notice in \cref{fig:probability}(a,b) the wider distribution of the permeability components for the smaller value of \linebreak haematocrit. This is also manifested in \cref{fig:probability}(c) by the decrease in the standard deviation of the distributions as the haematocrit increases. However, the coefficient of variation of $\mathcal{K}^{nn}$ increases as the haematocrit increases, meaning that we expect to observe larger relative deviations from the mean value for larger haematocrit values. This is because the gradients of the blood viscosity  with respect to the vessel diameter increase when the haematocrit increases (see, for example, Figure.~1.12 in \cite{pries2008blood}). This, in turn, leads to greater relative variability in vessel conductance for larger values of haematocrit.

\begin{figure}[H]
\includegraphics[scale=0.25]{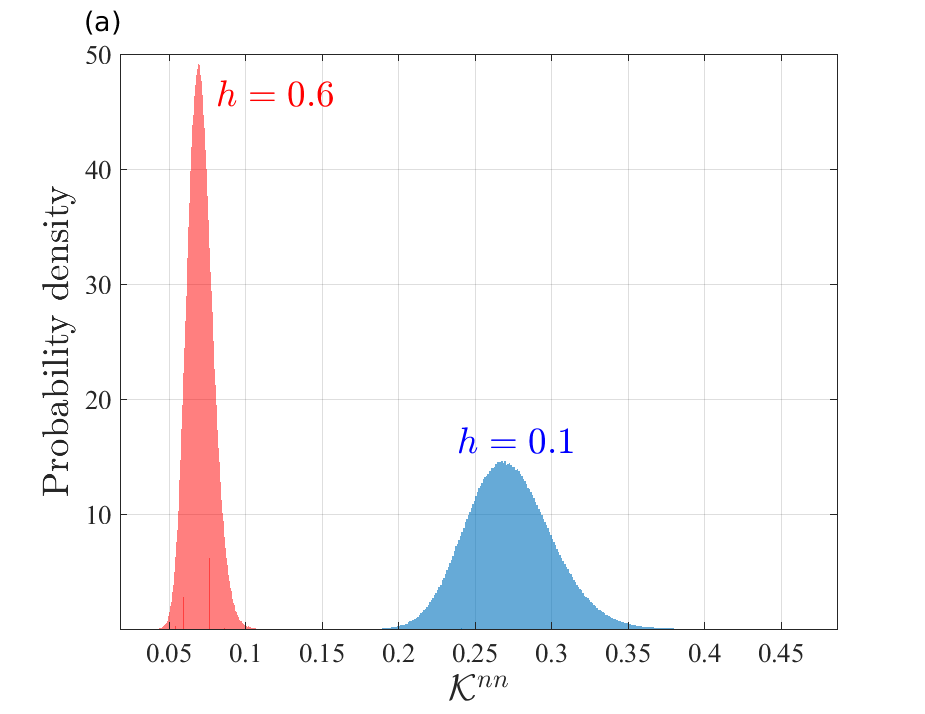}\hspace*{0.5cm}\includegraphics[scale=0.25]{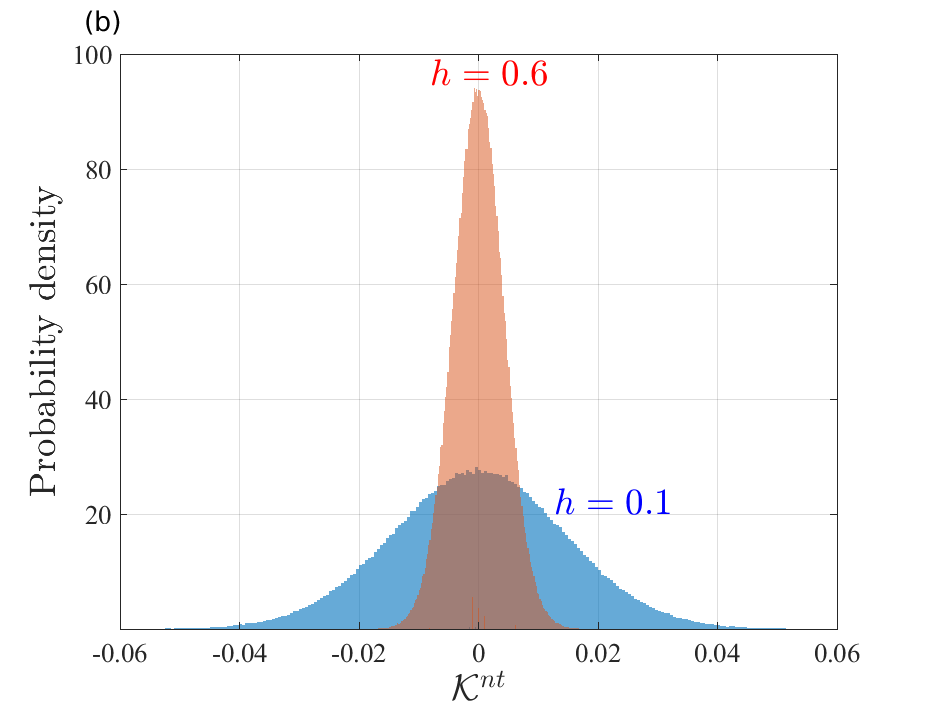}
\centering{}
\includegraphics[scale=0.25]{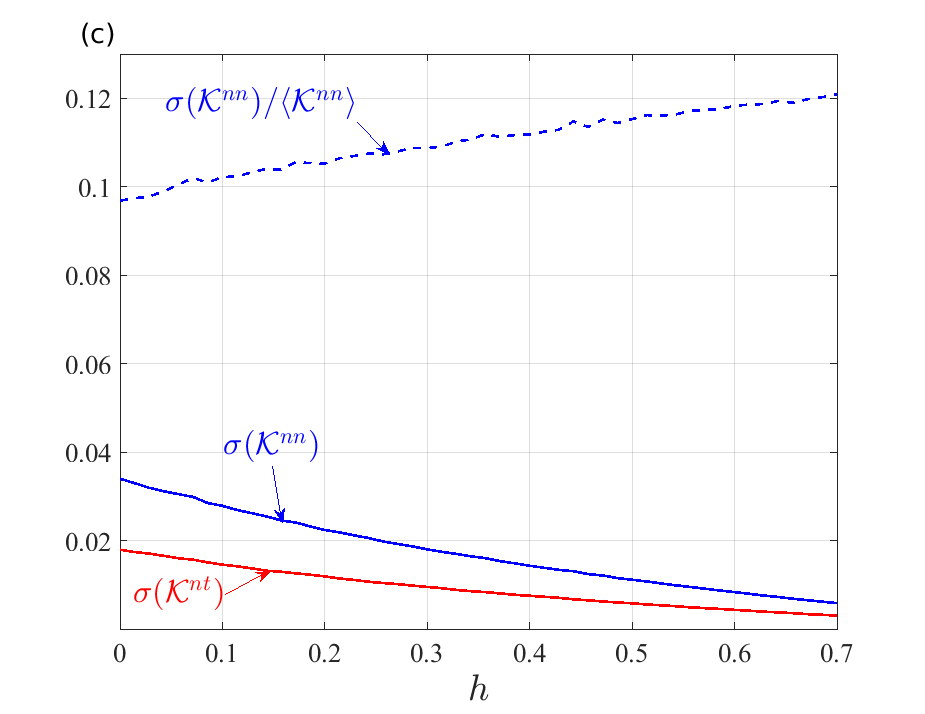}
\caption{The probability density distribution of (a) $\mathcal{K}^{nn}$ and (b) $\mathcal{K}^{nt}$ for haematocrit values of $h=0.1$ (blue bars) and $h=0.6$ (red bars). The results were generated from $10^6$ realizations of a honeycomb network in which a periodic unit cell had $N_{\mathrm{row}}=9$ rows (190 nodes, 304 vessels) and where the vessel diameters are distributed according to a log-normal distribution with  mean diameter, $\langle \widetilde{d}\rangle=20$, and $\sigma_{LN}=0.3$. In \cref{fig:probability}(c) the standard deviation of the distribution, $\sigma \left(\mathcal{K}^{nn}\right)$ (solid-blue line) and $\sigma\left(\mathcal{K}^{nt}\right)$ (solid-red line), are plotted against the haematocrit value; the coefficient of variation of $\mathcal{K}^{nn}$ is shown by the dashed-blue line. \label{fig:probability}}
\end{figure}

\subsection{Three-dimensional networks with random vessel orientations\label{sec:random_3D}}

We used the following algorithm to generate three-dimensional unit cells containing \linebreak
periodic networks with random vessel orientations:

\begin{enumerate}[label=(\Roman*)]
\item We defined a cubical cell, with an edge length $\widetilde{L}_0$, that contains the unit-cell nodes (as usual, tildes denote scaling with $1\, \unit{\mu m}$).

\item We generated $N_{\mathrm{tree}}$ realizations of tree-type networks where each tree had $N_{\mathrm{gen}}$ generations. For each tree we positioned the first node at a random location within the unit cell and then randomly chose the positions of the two daughter nodes such that the distance of each node from the parent node (i.e., the vessel length) was sampled according to a given vessel length distribution [see item (III) below]. The position of each node was sampled from a uniform distribution over the surface of a sphere where its origin was the parent node and its radius was the sampled vessel length. We repeated the process for the daughter nodes (now viewing them as the parent nodes of the next generation) until the required number of generations, $N_{\mathrm{gen}}$, was reached.

\item The distribution of the vessel length-to-diameter ratio, $R$, was chosen from a log-normal distribution with a mean $\langle R \rangle$. Assigning the random log-normally distributed vessel diameter, $\widetilde{d}$, as in \cref{eq:log_normal}, the vessel length, $\widetilde{l}$, was given by $\widetilde{l}=R\widetilde{d}$.

\item In order to impose periodicity, if the position $\mathbf{\widetilde{x}_0}$ of a node fell outside the cell box, it was shifted according to the transformation
\[
\widetilde{x}_0^{k} \rightarrow \begin{cases}
\widetilde{x}_0^{k} + \widetilde{L}_0, & \widetilde{x}_0^{k} < -\widetilde{L}_0/2 \\
\widetilde{x}_0^{k} - \widetilde{L}_0, & \widetilde{x}_0^{k} > \widetilde{L}_0/2 \\
\end{cases}
\]

\item We randomly connected the open ends in the first and last generations of each tree between the different trees or within the same tree such that each node in the network had a coordination number of 3.

\item The nondimensional components of the permeability tensor were calculated via
\begin{equation}
\mathcal{K}^{kl}=\frac{\langle\widetilde{l}\rangle^2}{2\langle\widetilde{d}\rangle^4\widetilde{L}_0}\sum_{\mathbf{r} \in \mathcal{N}}\sum_{i=1}^{N}\sum_{j=1}^{N}\frac{ \widetilde{d}_{ij}^{4}\left(r^{k}r^{l}+2r^{k}v_{i}^{l}\right)}{\widetilde{l}_{ij}\mu(h,\widetilde{d}_{ij})}.
\label{eq:K_3D}
\end{equation}

\end{enumerate}

An example of a periodic unit cell generated using the aforementioned algorithm is presented in \cref{fig:3d_random_illustration}.
\begin{figure}[H]
\includegraphics[scale=0.4]{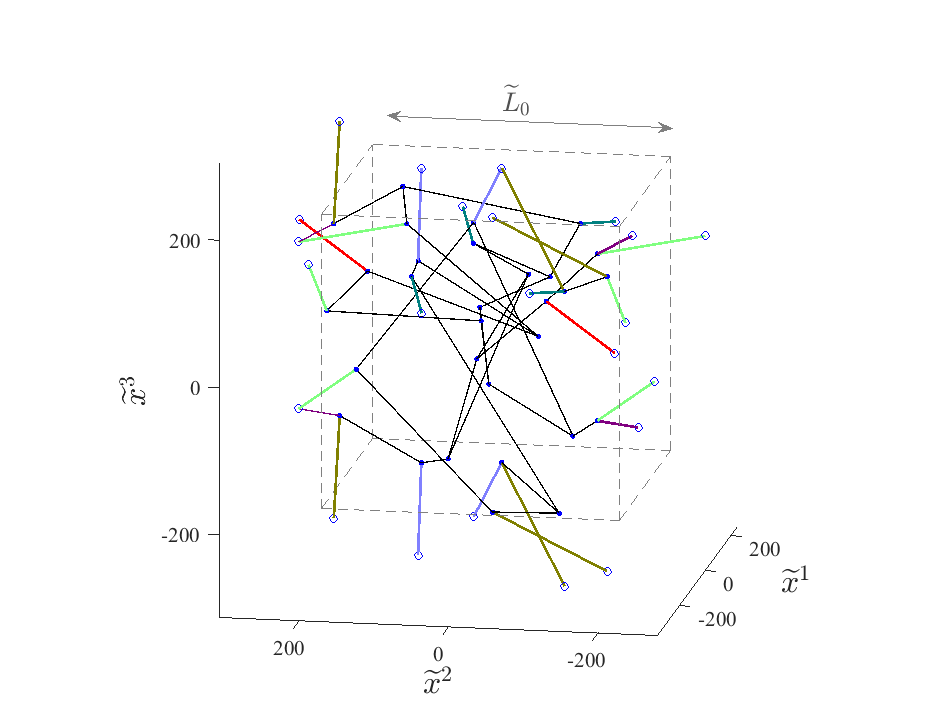}
\centering{}\caption{Illustration of a periodic unit cell of a three-dimensional random network composed of $N_{\mathrm{tree}}=2$ trees, each having $N_{\mathrm{gen}}=3$ generations. Vessels connected to neighbouring unit cells are marked in colour, where vessels connected to the same or opposite neighbour  (having identical $|\bf r|$) share the same colour. The cell box is illustrated in dashed-grey lines; nodes outside of the cell box belonging to neighbouring unit cells are indicated by open circles.
\label{fig:3d_random_illustration}}
\end{figure}

We ran $10^6$ realizations of 3D periodic unit cells (as described above) each with uniform haematocrit values in the range $h \in [0,0.7]$. \Cref{fig:K_scatter_3d_random} shows the scatter in the components of the resulting permeability tensor. Due to the average anisotropy induced by the randomly oriented vessels in the network, the distributions of all diagonal components are similar, as are the distributions of all off-diagonal \linebreak
components. Therefore, we present the distribution of all diagonal terms together and denote them by $\mathcal{K}^{nn}$ [\cref{fig:K_scatter_3d_random}(a)]; the off-diagonal terms are denoted by \linebreak
$\mathcal{K}^{nt}$ [\cref{fig:K_scatter_3d_random}(b)]. Here, because the network is anisotropic on average, the mean of the off-diagonal terms, $\langle\mathcal{K}^{nt}\rangle=0$, which is readily apparent in \cref{fig:K_scatter_3d_random}(b). 
Comparing the order-of-magnitude of the permeabilities of the 3D random network and the 2D honeycomb network (\cref{sec:honeycomb}), we notice that the former is approximately ten times larger. This means that a three-dimensional network consisting of 2D honeycomb unit cells lying in the $(x^1,x^2)$ plane stacked in the $x^3$ direction, with a density $\sim 10/l^*_0$ (where $l^*_0$ is the dimensional vessel length in the honeycomb unit cell) will have a permeability similar in magnitude to that of the 3D network presented in \cref{fig:K_scatter_3d_random}.

\begin{figure}[H]
\includegraphics[scale=0.26]{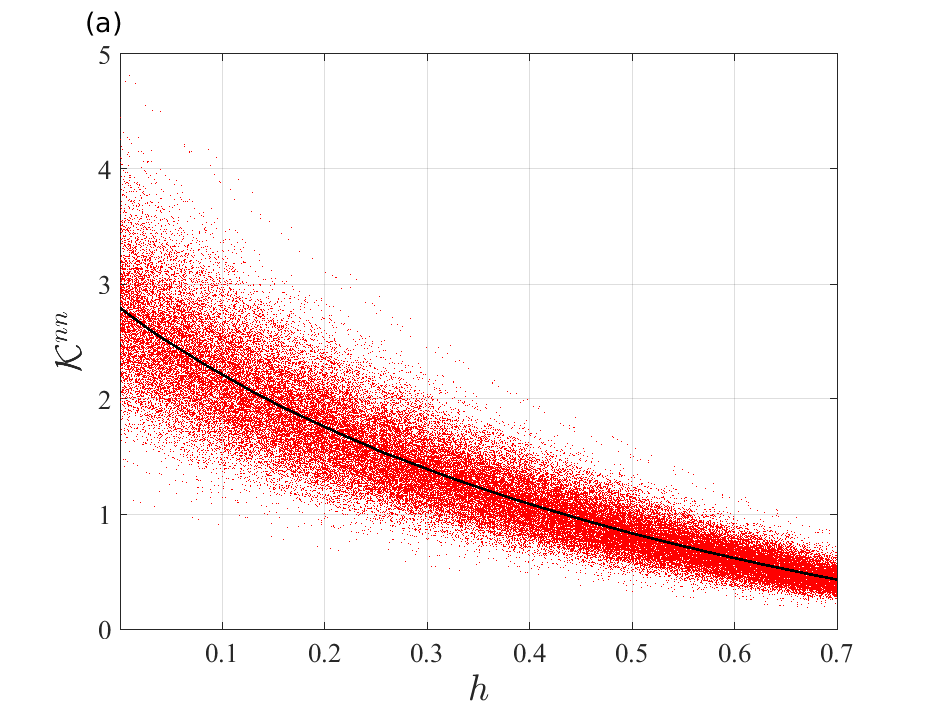}\hfill{}\includegraphics[scale=0.26]{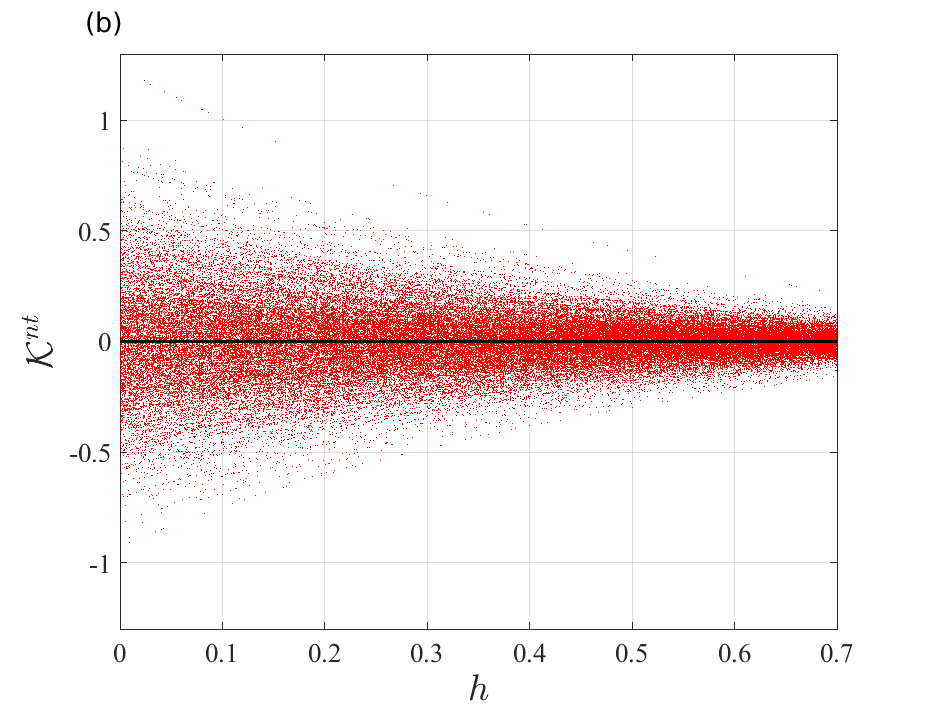}
\centering{}\caption{Scatter plots showing how the diagonal and off-diagonal component elements of the permeability tensor, (a)~$\mathcal{K}^{nn}$ and (b)~$\mathcal{K}^{nt}$, respectively, depend on the haematocrit, $h$. The results were generated from $10^6$ realizations of a 3D random network in which a periodic unit cell had $N_{\mathrm{tree}}=6$ trees, each composed of $N_{\mathrm{gen}}=5$ generations (378 nodes, 567 vessels), in a cubical box with edge length $\widetilde{L}_0=400$. The vessel diameters are distributed according to a log-normal distribution with mean diameter, $\langle\widetilde{d}\rangle=20$, and $\sigma_{LN}=0.3$; the length-to-diameter ratio of the vessels is log-normally distributed with mean, $\langle R \rangle=6$, and $\sigma_{LN}=0.3$. The means of the distributions are given by the black lines.
\label{fig:K_scatter_3d_random}}
\end{figure}

To illustrate how the statistical features of the 3D random networks change as the haematocrit value varies, in \cref{fig:probability_3d_random}(a,b) we plot the probability density distribution of the permeability tensor components for two values of haematocrit, $h=0.1,0.6$; \cref{fig:probability_3d_random}(c) presents the results for the standard deviation and coefficient of variation of the permeability tensor components as a function of the haematocrit value. As in \cref{fig:probability}, the standard deviation of the distributions decreases as the haematocrit increases, while the coefficient of variation of $\mathcal{K}^{nn}$ increases as the haematocrit increases. However, the relative increase in the coefficient of variation is smaller in the 3D random network than in the honeycomb network ($\approx 10\%$ increase in the former compared to $\approx 20\%$ increase in the latter between $h=0$ to $h=0.7$). This is because the 3D random network allows for a change in the vessel length, such that vessels with larger diameter are more likely to be longer. Due to the opposing effects of vessel diameter and length on the conductance [see \cref{{eq:K_final}}], this leads to smaller relative variability in the vessel conductances as the haematocrit increases compared to the case in which the vessel lengths are fixed.

\begin{figure}[H]
\centering{}
\includegraphics[scale=0.25]{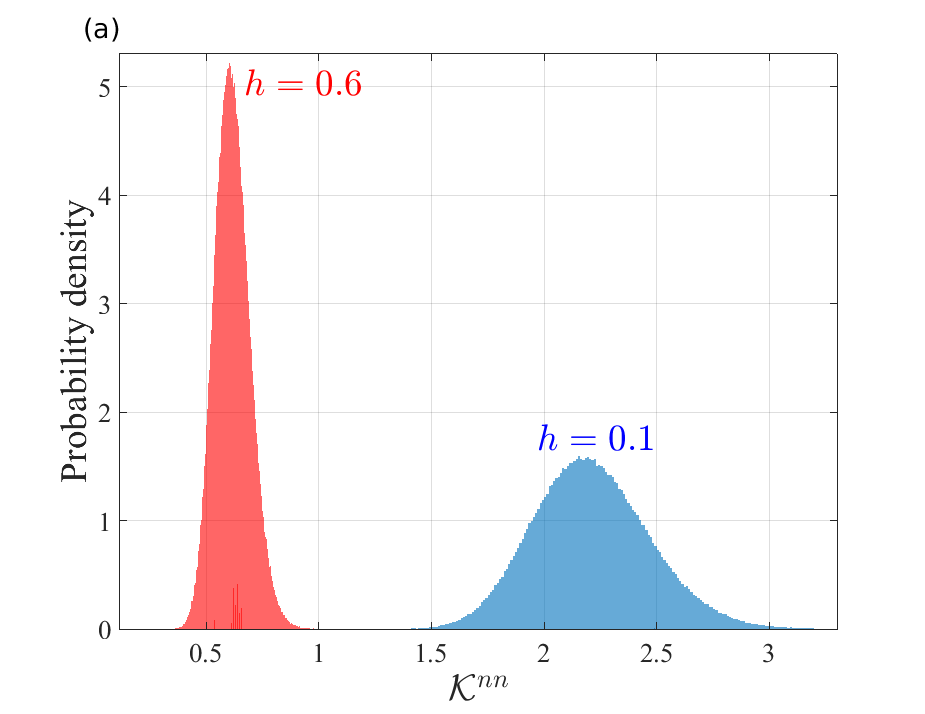}\hspace*{0.5cm}\includegraphics[scale=0.25]{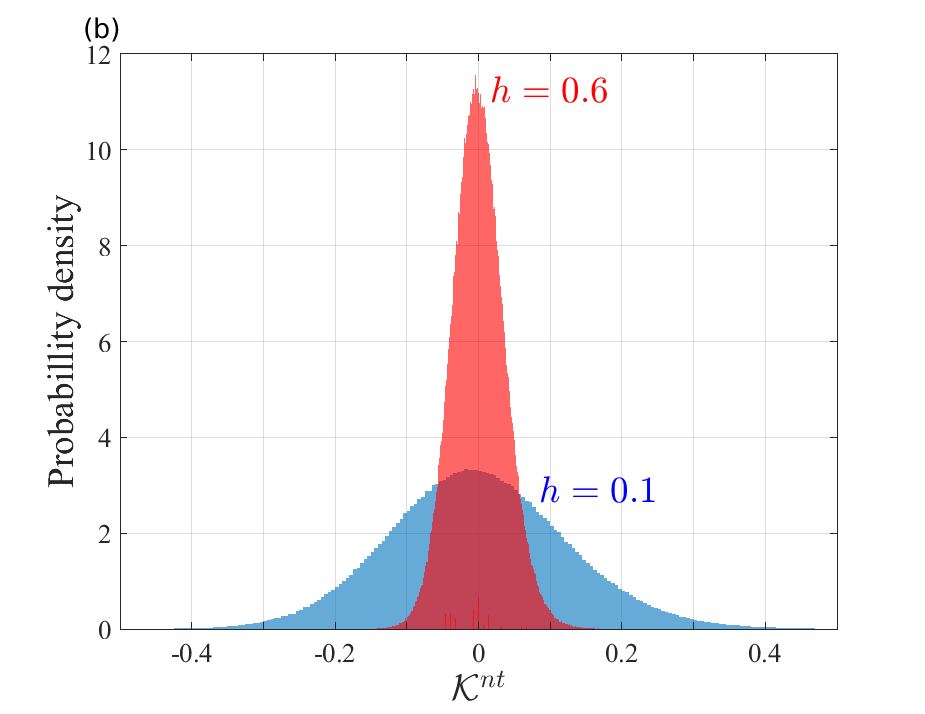}
\centering{}
\includegraphics[scale=0.25]{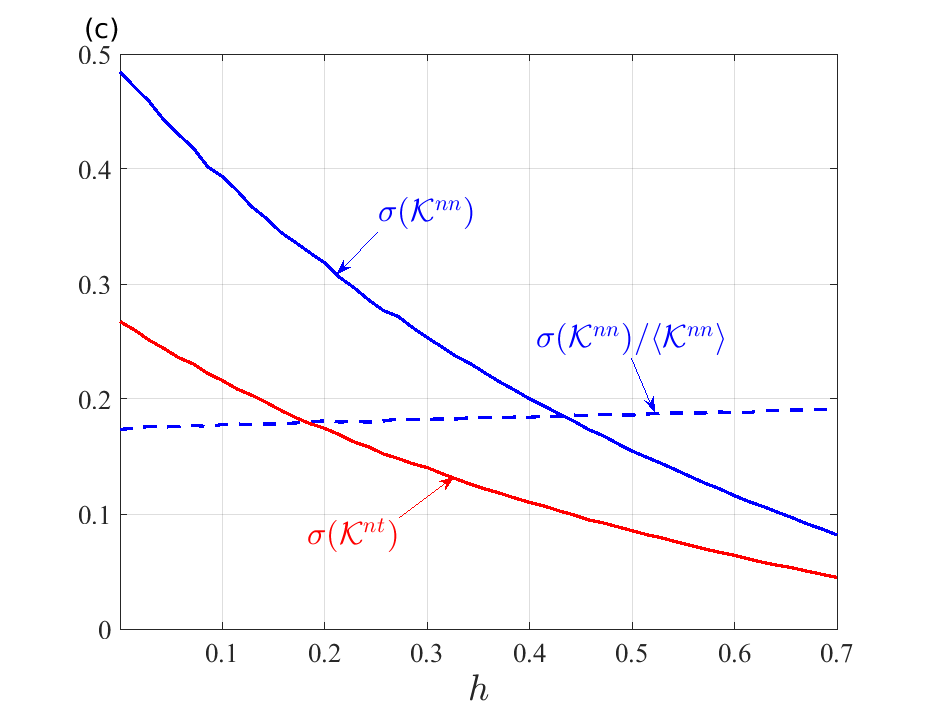}
\caption{The probability density distribution of (a) $\mathcal{K}^{nn}$ and (b) $\mathcal{K}^{nt}$ for haematocrit values of $h=0.1$ (blue bars) and $h=0.6$ (red bars) generated from $10^6$ realizations of a 3D random network in which a periodic unit cell had $N_{\mathrm{tree}}=6$ trees, each composed of $N_{\mathrm{gen}}=5$ generations (378 nodes, 567 vessels), in a cubical box with edge length $\widetilde{L}_0=400$. The vessel diameters are distributed according to a log-normal distribution with mean diameter, $\langle\widetilde{d}\rangle=20$, and $\sigma_{LN}=0.3$; the length-to-diameter ratio of the vessels is log-normally distributed with mean, $\langle R \rangle=6$, and $\sigma_{LN}=0.3$. In \cref{fig:probability_3d_random}(c) the standard deviation of the distribution, $\sigma \left(\mathcal{K}^{nn}\right)$ (solid-blue line) and $\sigma\left(\mathcal{K}^{nt}\right)$ (solid-red line), are plotted against the haematocrit value; the coefficient of variation of $\mathcal{K}^{nn}$ is shown by the dashed-blue line. \label{fig:probability_3d_random}}
\end{figure}

\section{Conclusion\label{sec:conclusion}}
The goal of the present work was to formulate a macroscopic, coarse-grained description for blood flow in the microcirculation. We adopted a standard description of blood as a continuous mixture of plasma and haematocrit, which is formulated as a nonlinear system of algebraic equations for the nodal \linebreak
pressures, flow and haematocrit in the network edges (vessels). Here, two factors drive the nonlinearities: the dependence of blood viscosity on the haematocrit \cite{pries1994resistance}, and the coupling of haematocrit and flow at network bifurcations due to the nonlinear splitting phenomenon \cite{pries1989red}. In contrast to previous studies of flow in networks \cite{chapman2021} and blood flow homogenisation (e.g., \cite{chapman2008multiscale,el2015multi,shipley2010multiscale}), which assumed a single-phase, linear Stokes flow, in this work we accounted for the two-phase nature of blood and the associated nonlinearities. Consequently, we have developed a coupled set of equations describing the flow and haematocrit propagation at the macroscale.

In order to account for the nonlinearity of blood flow and to facilitate \linebreak
the homogenisation procedure, we focused on the limit of small haematocrit \linebreak
heterogeneity. This scenario prevails when the diameter of the vessels in the network are sufficiently large (typically much larger than the diameter of a single RBC), and when the diameters of daughter vessels at bifurcations are not markedly different. Under these conditions we were able to study the asymptotic limit at which the vessel haematocrits deviate slightly from the average haematocrit in the periodic unit cell and, thus, to formulate a coarse-grained equation for the average-haematocrit propagation. Interestingly, but perhaps not surprisingly, this resulted in a purely advective equation for the propagation of haematocrit at the macroscale, where the effect of haematocrit splitting has been averaged out. Considering the homogenisation of the flow, in this limit, the slow change of the leading order haematocrit enabled us to follow the homogenisation methodology developed by \chap and to derive a Darcy model with a haematocrit-dependent permeability tensor.

We calculated the components of the permeability tensor for two different network topologies: (i) 2D honeycomb networks and (ii) 3D networks with randomly oriented vessels. For both types of networks a distribution of vessel diameters was considered. As expected, the mean permeability decreases as the haematocrit value increases due to the increase in flow resistance. However, it was observed that the coefficient of variation of the permeability increases with the haematocrit. This was rationalized in terms of the increase in the blood viscosity gradient with respect to the vessel diameter as the haematocrit level increases.

The homogenisation methodology used in the paper is based on the assumption that the network is homogeneous at the microscale such that it can be approximated using a representative unit cell (or at least a statistical distribution of this unit cell). This assumption is reasonable for the more distal parts of the microcirculation, while the effect of larger vessels cannot be accounted for within the homogenisation methodology. Peyrounette et al. \cite{peyrounette_lorthois_2018}, and later on Shipley et al. \cite{shipley2020hybrid}, derived hybrid discrete-continuum models for blood flow where a homogenised model is used in the capillary region and the larger vessels, feeding the capillaries, are solved discretely. While in the models by \cite{peyrounette_lorthois_2018,shipley2020hybrid} the haematocrit value was fixed, they can now be extended to account for spatially varying haematocrit using the homogenised model developed in this paper.

The limit of small haematocrit heterogeneity studied in this work is likely to prevail in regions of the microcirculation where vessel diameters are sufficiently large and when vessel bifurcations are mostly regular (bifurcations of vessels with similar diameters). However, tumour vascular networks are known to be highly abnormal and, thus, to promote irregular blood flow and haematocrit distributions \cite{Jain05,nagy2009tumour}. Understanding this irregular flow is of great biological significance as it is hypothesised to give rise to the phenomenon of cycling hypoxia in tumours \cite{Gil18,CTF16} where some regions of the tumour experience cycles of hypoxia and re-oxygenation. In turn, cycling hypoxia is associated with poor prognosis, such as selective advantage for malignant growth \cite{bottaro2003out,Hoc96}, and chemo- and radio-therapy resistance \cite{Gra53,HB04,Hor12}.
In order to homogenise the blood flow and haematocrit transport in tumours, the small haematocrit heterogeneity assumption must be relaxed and the nonlinearity of blood-flow at the microscale taken into account. Additionally, to account for the \linebreak
phenomenon of cycling hypoxia, a methodology to homogenise unsteady blood-flow must be developed.
While this may not be analytically tractable, a data-driven approach where the unit-cell problem is solved numerically to generate data for the coarse-graining process \cite{taghizadeh2022} may prove useful. This suggests an avenue for future research.

Since RBCs are the blood component that is responsible for transporting and delivering oxygen to biological tissues, it is important to better understand their distribution in the microcirculation. Having formulated equations that describe the transport of haematocrit at the macroscale, it is now possible to derive a coarse-grained description for the oxygen distributions at the tissue-level. This is especially important in order to understand how large-scale regions of (sustained) hypoxia form in tumours and, thus, constitutes a natural topic for future work.

\vspace{5mm}

\appendix
\section{The blood viscosity model\label{app:blood_viscosity}}
Here we detail the functional form of the empirical model for the \emph{apparent} blood viscosity due to Pries et al. \cite{pries1994resistance}.
For a vessel having a diameter, $d$ (in micron units), and haematocrit fraction, $h$, the relative viscosity (blood apparent viscosity divided by plasma viscosity) is given by
\begin{align}
    \mu(h,d) &= \left[1+(\mu^\circ-1)\frac{(1-h)^C-1}{(1-0.45)^C-1} \left(\frac{d}{d-1.1}\right)^2 \right]\nonumber \\
    &\times \left(\frac{d}{d-1.1}\right)^2.
    \label{eq:blood_viscosity}
\end{align}
The following expression for the blood viscosity is fit to data collected \emph{in vivo} by \cite{pries1994resistance}. Here, the reference viscosity, $\mu^\circ$, and the exponent, $C$, are given by the following empirical expressions 
\begin{equation}
    \mu^\circ(d) = 6 \exp\left(-0.085 d\right)-2.44 \exp\left(-0.06 d^{0.645} \right) +3.2
\end{equation}
and
\begin{align}
C(d)=& \left[0.8+\exp(-0.075 d) \right] 
\left\{ \left[1+10\left(\frac{d}{10}\right)^{12} \right]^{-1}-1\right\}\nonumber \\
&+\left[1+10\left(\frac{d}{10}\right)^{12} \right]^{-1}.
\end{align}

\section{Proofs for the identities in \cref{eq:identities}\label{app:identities_1}}

\begin{equation} 
\begin{aligned}
A_{0}^{T}= & \sum_{\mathbf{r} \in \mathcal{N}}\left[g_{ji}^{\mathbf{r}}-\delta_{ji}\sum_{m=1}^{N}g_{jm}^{\mathbf{r}}\right]=\sum_{\mathbf{r} \in \mathcal{N}}\left[g_{ji}^{-\mathbf{r}}-\delta_{ji}\sum_{m=1}^{N}g_{jm}^{\mathbf{r}}\right]\\
= & \sum_{\mathbf{r} \in \mathcal{N}}\left[g_{ij}^{\mathbf{r}}-\delta_{ji}\sum_{m=1}^{N}g_{jm}^{\mathbf{r}}\right] = A_{0},
\end{aligned}
\end{equation}
where we have relabeled $\mathbf{r}\leftrightarrow-\mathbf{r}$ in
the first sum and then used relation (\ref{eq:g_period}).

\begin{equation} 
\left(A_{0}\mathbf{u}\right)_{i}=\sum_{\mathbf{r} \in \mathcal{N}}\sum_{j=1}^{N}\left[g_{ij}^{\mathbf{r}}-\delta_{ij}\sum_{m=1}^{N}g_{im}^{\mathbf{r}}\right]=\sum_{\mathbf{r} \in \mathcal{N}}\left[\sum_{j=1}^{N}g_{ij}^{\mathbf{r}}-\sum_{m=1}^{N}g_{im}^{\mathbf{r}}\right]=\mathbf{0},\;\text{for}\;i=1,..,N.
\end{equation}
In a similar manner we can show that $A_{1}\mathbf{u}=\mathbf{0}$.

\begin{equation}
\begin{aligned}\left(\mathbf{u}^{T}A_{0}\right)_{j}= & \sum_{\mathbf{r} \in \mathcal{N}}\sum_{i=1}^{N}\left[g_{ij}^{\mathbf{r}}-\delta_{ij}\sum_{m=1}^{N}g_{im}^{\mathbf{r}}\right]=\sum_{\mathbf{r} \in \mathcal{N}}\left[\sum_{i=1}^{N}g_{ij}^{\mathbf{r}}-\sum_{m=1}^{N}g_{jm}^{\mathbf{r}}\right]\\
= & \sum_{\mathbf{r} \in \mathcal{N}}\left[\sum_{i=1}^{N}g_{ij}^{\mathbf{r}}-\sum_{m=1}^{N}g_{jm}^{\mathbf{-r}}\right]\\
= & \sum_{\mathbf{r} \in \mathcal{N}}\left[\sum_{i=1}^{N}g_{ij}^{\mathbf{r}}-\sum_{m=1}^{N}g_{mj}^{\mathbf{r}}\right]=\mathbf{0},\;\;\;\text{for}\;j=1,..,N.
\end{aligned}
\end{equation}
where we have relabeled $\mathbf{r}\leftrightarrow-\mathbf{r}$ in
the second sum and then used relation (\ref{eq:g_period}).

\begin{equation}
\begin{aligned}
\mathbf{u}^{T}B_{0}^{l}\mathbf{u}= & \sum_{\mathbf{r} \in \mathcal{N}}\sum_{i=1}^{N}\sum_{j=1}^{N}\left[(x_{j}^{l}+r^{l}L^{l})g_{ij}^{\mathbf{r}}-\delta_{ij}x_{i}^{l}\sum_{m=1}^{N}g_{im}^{\mathbf{r}}\right]\\
= & \sum_{\mathbf{r} \in \mathcal{N}}\sum_{i=1}^{N}\sum_{j=1}^{N}\left[(x_{i}^{l}+r^{l}L^{l})g_{ji}^{\mathbf{r}}-\delta_{ij}x_{i}^{l}\sum_{m=1}^{N}g_{im}^{\mathbf{r}}\right]\\
= & \sum_{\mathbf{r} \in \mathcal{N}}\sum_{i=1}^{N}\sum_{j=1}^{N}\left[(x_{i}^{l}-r^{l}L^{l})g_{ji}^{-\mathbf{r}}-\delta_{ij}x_{i}^{l}\sum_{m=1}^{N}g_{im}^{\mathbf{r}}\right]\\
= & \sum_{\mathbf{r} \in \mathcal{N}}\left[\sum_{i=1}^{N}(x_{i}^{l}-r^{l}L^{l})\sum_{j=1}^{N}g_{ij}^{\mathbf{r}}-\sum_{i=1}^{N}x_{i}^{l}\sum_{m=1}^{N}g_{im}^{\mathbf{r}}\right]\\
= & -L^{l}\sum_{\mathbf{r} \in \mathcal{N}}\sum_{i=1}^{N}\sum_{j=1}^{N}r^{l}g_{ij}^{\mathbf{r}}=L^{l}\sum_{\mathbf{r} \in \mathcal{N}}\sum_{i=1}^{N}\sum_{j=1}^{N}r^{l}g_{ji}^{\mathbf{-r}}=L^{l}\sum_{\mathbf{r} \in \mathcal{N}}\sum_{i=1}^{N}\sum_{j=1}^{N}r^{l}g_{ij}^{\mathbf{r}}=0,
\end{aligned}
\end{equation}
where we have relabeled $i\leftrightarrow j$ and $\mathbf{r}\leftrightarrow-\mathbf{r}$
in the first sum, and then used (\ref{eq:g_period}). We followed
the same procedure in the fifth line.

\vspace{5mm}

\section{Proofs for the identities in \cref{eq:identities_2}\label{app:identities_2}}

\begin{equation}
\begin{aligned}
& \mathbf{u}^{T}A_{1}(g_{ij}^{\mathbf{r}(0)})\mathbf{w}^{l} =\\
&  \frac{1}{2}\sum_{\mathbf{r} \in \mathcal{N}}\sum_{j=1}^{N}w_{j}^{l}\left[\sum_{i=1}^{N}\sum_{s=1}^{3}(x_{i}^{s}+x_{j}^{s}+r^{s}L^{s})\frac{\partial g_{ij}^{\mathbf{r}(0)}}{\partial x^{s}}-\sum_{m=1}^{N}\sum_{s=1}^{3}(x_{j}^{s}+x_{m}^{s}+r^{s}L^{s})\frac{\partial g_{jm}^{\mathbf{r}(0)}}{\partial x^{s}}\right]\\
& = \frac{1}{2}\sum_{\mathbf{r} \in \mathcal{N}}\sum_{j=1}^{N}w_{j}^{l}\left[\sum_{i=1}^{N}\sum_{s=1}^{3}(x_{i}^{s}+x_{j}^{s}+r^{s}L^{s})\frac{\partial g_{ij}^{\mathbf{r}(0)}}{\partial x^{s}}-\sum_{m=1}^{N}\sum_{s=1}^{3}(x_{j}^{s}+x_{m}^{s}-r^{s}L^{s})\frac{\partial g_{jm}^{\mathbf{-r}(0)}}{\partial x^{s}}\right]\\
& = \frac{1}{2}\sum_{\mathbf{r} \in \mathcal{N}}\sum_{j=1}^{N}w_{j}^{l}\left[\sum_{i=1}^{N}\sum_{s=1}^{3}(x_{i}^{s}+x_{j}^{s}+r^{s}L^{s})\frac{\partial g_{ij}^{\mathbf{r}(0)}}{\partial x^{s}}-\sum_{m=1}^{N}\sum_{s=1}^{3}(x_{j}^{s}+x_{m}^{s}-r^{s}L^{s})\frac{\partial g_{mj}^{\mathbf{r}(0)}}{\partial x^{s}}\right]\\
& =  \sum_{\mathbf{r} \in \mathcal{N}}\sum_{i=1}^{N}\sum_{j=1}^{N}\sum_{s=1}^{3}r^{s}L^{s}w_{j}^{l}\frac{\partial g_{ij}^{\mathbf{r}(0)}}{\partial x^{s}},
\end{aligned}
\label{eq:u_A1_w}
\end{equation}
where we have relabeled $\mathbf{r}\leftrightarrow-\mathbf{r}$ in
the second sum and then used (\ref{eq:g_period}) in the first sum.

\begin{equation}
\begin{aligned}
\mathbf{u}^{T}B_{1}^{l}(g_{ij}^{\mathbf{r}(0)})\mathbf{u}= 
& \frac{1}{2}\sum_{\mathbf{r} \in \mathcal{N}}\sum_{j=1}^{N}\Biggl[\Biggr.(x_{j}^{l}+r^{l}L^{l})\sum_{i=1}^{N}\sum_{s=1}^{3}(x_{i}^{s}+x_{j}^{s}+r^{s}L^{s})\frac{\partial g_{ij}^{\mathbf{r}(0)}}{\partial x^{s}} \\
& \qquad \qquad \qquad \qquad \qquad \quad -x_{j}^{l}\sum_{m=1}^{N}\sum_{s=1}^{3}(x_{j}^{s}+x_{m}^{s}+r^{s}L^{s})\frac{\partial g_{jm}^{\mathbf{r}(0)}}{\partial x^{s}}\Biggr.\Biggl]\\
 = & \frac{1}{2}\sum_{\mathbf{r} \in \mathcal{N}}\sum_{j=1}^{N}\Biggl[\Biggr.(x_{j}^{l}+r^{l}L^{l})\sum_{i=1}^{N}\sum_{s=1}^{3}(x_{i}^{s}+x_{j}^{s}+r^{s}L^{s})\frac{\partial g_{ij}^{\mathbf{r}(0)}}{\partial x^{s}} \\
& \qquad \qquad \qquad \qquad \qquad \quad -x_{j}^{l}\sum_{m=1}^{N}\sum_{s=1}^{3}(x_{j}^{s}+x_{m}^{s}-r^{s}L^{s})\frac{\partial g_{jm}^{\mathbf{-r}(0)}}{\partial x^{s}}\Biggr.\Biggl]\\
& =\frac{1}{2}\sum_{\mathbf{r} \in \mathcal{N}}\sum_{j=1}^{N}\Biggl[\Biggr.(x_{j}^{l}+r^{l}L^{l})\sum_{i=1}^{N}\sum_{s=1}^{3}(x_{i}^{s}+x_{j}^{s}+r^{s}L^{s})\frac{\partial g_{ij}^{\mathbf{r}(0)}}{\partial x^{s}}\\
& \qquad \qquad \qquad \qquad \qquad \quad -x_{j}^{l}\sum_{m=1}^{N}\sum_{s=1}^{3}(x_{j}^{s}+x_{m}^{s}-r^{s}L^{s})\frac{\partial g_{mj}^{\mathbf{r}(0)}}{\partial x^{s}}\Biggr.\Biggl]\\
& =\frac{1}{2}\sum_{\mathbf{r} \in \mathcal{N}}\sum_{j=1}^{N}\Biggl[\Biggr.2x_{j}^{l}\sum_{i=1}^{N}\sum_{s=1}^{3}(r^{s}L^{s})\frac{\partial g_{ij}^{\mathbf{r}(0)}}{\partial x^{s}} \\
& \qquad \qquad \qquad \qquad \qquad \quad +r^{l}L^{l}\sum_{i=1}^{N}\sum_{s=1}^{3}(x_{i}^{s}+x_{j}^{s}+r^{s}L^{s})\frac{\partial g_{ij}^{\mathbf{r}(0)}}{\partial x^{s}}\Biggr.\Biggl]\\
& =\frac{1}{2}\sum_{\mathbf{r} \in \mathcal{N}}\sum_{i=1}^{N}\sum_{j=1}^{N}\sum_{s=1}^{3}\left[r^{l}L^{l}r^{s}L^{s}+2x_{j}^{l}r^{s}L^{s}+r^{l}L^{l}(x_{i}^{s}+x_{j}^{s})\right]\frac{\partial g_{ij}^{\mathbf{r}(0)}}{\partial x^{s}}\\
& =\frac{1}{2}\sum_{\mathbf{r} \in \mathcal{N}}\sum_{i=1}^{N}\sum_{j=1}^{N}\sum_{s=1}^{3}\left[r^{l}L^{l}r^{s}L^{s}+2x_{j}^{l}r^{s}L^{s}\right]\frac{\partial g_{ij}^{\mathbf{r}(0)}}{\partial x^{s}},
\end{aligned}
\label{eq:u_B1_u}
\end{equation}
where we have relabeled $\mathbf{r}\leftrightarrow-\mathbf{r}$ and then used (\ref{eq:g_period}) in the second sum. In the last line we made use of
\[
\sum_{\mathbf{r} \in \mathcal{N}}\sum_{i=1}^{N}\sum_{j=1}^{N}r^{l}L^{l}x_{i}^{s}\frac{\partial g_{ij}^{\mathbf{r}(0)}}{\partial x^{s}}=-\sum_{\mathbf{r} \in \mathcal{N}}\sum_{i=1}^{N}\sum_{j=1}^{N}r^{l}L^{l}x_{j}^{s}\frac{\partial g_{ij}^{\mathbf{r}(0)}}{\partial x^{s}}.
\]

\begin{equation}
\begin{aligned}
\mathbf{u}^{T}C_{0}^{kl}(g_{ij}^{\mathbf{r}(0)})\mathbf{u}= & \frac{1}{2}\sum_{\mathbf{r} \in \mathcal{N}}\sum_{i=1}^{N}\sum_{j=1}^{N}\left[(x_{j}^{k}+r^{k}L^{k})(x_{j}^{l}+r^{l}L^{l})g_{ij}^{\mathbf{r}(0)}-\delta_{ij}x_{i}^{k}x_{i}^{l}\sum_{m=1}^{N}g_{im}^{\mathbf{r}(0)}\right]\\
= & \frac{1}{2}\sum_{\mathbf{r} \in \mathcal{N}}\sum_{j=1}^{N}\left[(x_{j}^{k}+r^{k}L^{k})(x_{j}^{l}+r^{l}L^{l})\sum_{i=1}^{N}g_{ij}^{\mathbf{r}(0)}-x_{j}^{k}x_{j}^{l}\sum_{m=1}^{N}g_{jm}^{\mathbf{r}(0)}\right]\\
= & \frac{1}{2}\sum_{\mathbf{r} \in \mathcal{N}}\sum_{j=1}^{N}\left[(x_{j}^{k}+r^{k}L^{k})(x_{j}^{l}+r^{l}L^{l})\sum_{i=1}^{N}g_{ij}^{\mathbf{r}(0)}-x_{j}^{k}x_{j}^{l}\sum_{m=1}^{N}g_{jm}^{\mathbf{-r}(0)}\right]\\
= & \frac{1}{2}\sum_{\mathbf{r} \in \mathcal{N}}\sum_{j=1}^{N}\left[(x_{j}^{k}+r^{k}L^{k})(x_{j}^{l}+r^{l}L^{l})\sum_{i=1}^{N}g_{ij}^{\mathbf{r}(0)}-x_{j}^{k}x_{j}^{l}\sum_{m=1}^{N}g_{mj}^{\mathbf{r}(0)}\right]\\
= & \frac{1}{2}\sum_{\mathbf{r} \in \mathcal{N}}\sum_{i=1}^{N}\sum_{j=1}^{N}\left[r^{k}L^{k}r^{l}L^{l}+r^{l}L^{l}x_{j}^{k}+r^{k}L^{k}x_{j}^{l}\right]g_{ij}^{\mathbf{r}(0)},
\end{aligned}
\label{eq:u_C0_u}
\end{equation}
where we have went through the same steps as in (\ref{eq:u_B1_u}).

\begin{equation}
\begin{aligned}
\mathbf{u}^{T}B_{0}^{k}(g_{ij}^{\mathbf{r}(0)})\mathbf{w}^{l}= & \sum_{\mathbf{r} \in \mathcal{N}}\sum_{i=1}^{N}\sum_{j=1}^{N}\left[(x_{j}^{k}+r^{k}L^{k})g_{ij}^{\mathbf{r}(0)}-\delta_{ij}x_{i}^{k}\sum_{m=1}^{N}g_{im}^{\mathbf{r}(0)}\right]w_{j}^{l}\\
= & \sum_{\mathbf{r} \in \mathcal{N}}\left[\sum_{i=1}^{N}\sum_{j=1}^{N}w_{j}^{l}(x_{j}^{k}+r^{k}L^{k})g_{ij}^{\mathbf{r}(0)}-\sum_{m=1}^{N}\sum_{j=1}^{N}w_{j}^{l}x_{j}^{k}g_{jm}^{\mathbf{r}(0)}\right]\\
= & \sum_{\mathbf{r} \in \mathcal{N}}\left[\sum_{i=1}^{N}\sum_{j=1}^{N}w_{j}^{l}(x_{j}^{k}+r^{k}L^{k})g_{ij}^{\mathbf{r}(0)}-\sum_{m=1}^{N}\sum_{j=1}^{N}w_{j}^{l}x_{j}^{k}g_{jm}^{\mathbf{-r}(0)}\right]\\
= & \sum_{\mathbf{r} \in \mathcal{N}}\left[\sum_{i=1}^{N}\sum_{j=1}^{N}w_{j}^{l}(x_{j}^{k}+r^{k}L^{k})g_{ij}^{\mathbf{r}(0)}-\sum_{m=1}^{N}\sum_{j=1}^{N}w_{j}^{l}x_{j}^{k}g_{mj}^{\mathbf{r}(0)}\right]\\
= & \sum_{\mathbf{r} \in \mathcal{N}}\sum_{i=1}^{N}\sum_{j=1}^{N}r^{k}L^{k}w_{j}^{l}g_{ij}^{\mathbf{r}(0)},
\end{aligned}
\label{eq:u_B0_w}
\end{equation}
where we have relabeled $\mathbf{r}\leftrightarrow-\mathbf{r}$ and
then used (\ref{eq:g_period}) in the second sum.
In a similar manner we have
\begin{equation}
\mathbf{u}^{T}B_{0}^{k}(g_{ij}^{\mathbf{r}(0)})\frac{\partial\mathbf{w}^{l}}{\partial x^{k}}=\sum_{\mathbf{r} \in \mathcal{N}}\sum_{i=1}^{N}\sum_{j=1}^{N}\sum_{k=1}^{3}r^{k}L^{k}\frac{\partial w_{j}^{l}}{\partial x^{k}}g_{ij}^{\mathbf{r}(0)}.\label{eq:u_B0_dw}
\end{equation}

\vspace{3mm}
\section*{Acknowledgments}
This work was supported by the Engineering and Physical Sciences Research Council [grant number EP/X023869/1]. 

We would like to thank Thomas Babb and Jon Chapman for fruitful discussions about the work contained in this manuscript. 

\bibliographystyle{siamplain}

\newpage
\bibliography{references}
\end{document}


\maketitle

\section{A detailed example}

Here we include some equations and theorem-like environments to show
how these are labeled in a supplement and can be referenced from the
main text.
Consider the following equation:
\begin{equation}
  \label{eq:suppa}
  a^2 + b^2 = c^2.
\end{equation}
You can also reference equations such as \cref{eq:matrices,eq:bb} 
from the main article in this supplement.

\lipsum[100-101]

\begin{theorem}
  An example theorem.
\end{theorem}

\lipsum[102]
 
\begin{lemma}
  An example lemma.
\end{lemma}

\lipsum[103-105]

Here is an example citation: \cite{KoMa14}.

\section[Proof of Thm]{Proof of \cref{thm:bigthm}}
\label{sec:proof}
\lipsum[106-112]

\section{Additional experimental results}
\Cref{tab:foo} shows additional
supporting evidence. 

\begin{table}[htbp]
{\footnotesize
  \caption{Example table}  \label{tab:foo}
\begin{center}
  \begin{tabular}{|c|c|c|} \hline
   Species & \bf Mean & \bf Std.~Dev. \\ \hline
    1 & 3.4 & 1.2 \\
    2 & 5.4 & 0.6 \\ \hline
  \end{tabular}
\end{center}
}
\end{table}

\bibliographystyle{siamplain}
\bibliography{references}